\newtheorem{lemma}{Lemma}
\newtheorem{lemma*}[lemma]{Lemma*}
\newtheorem{theorem}[lemma]{Theorem}
\newtheorem{theorem*}[lemma]{Theorem*}
\newtheorem{corollary}[lemma]{Corollary}
\newtheorem{observation}[lemma]{Observation}
\title{\LARGE Improved Algorithms for the Bichromatic Two-Center Problem for Pairs of Points\footnote{A preliminary version of this paper will appear in the Proceedings of the 16th Algorithms and Data Structures Symposium (WADS 2019). The work was partially done when Jie Xue was visiting Utah State University. The research of Jie Xue is partially supported by a Doctoral Dissertation Fellowship from the Graduate School of the University of Minnesota.}}
\author{
    Haitao Wang \\ Utah State University \\ \texttt{haitao.wang@usu.edu}
    \and
    Jie Xue \\ University of Minnesota \\ \texttt{xuexx193@umn.edu}
}
\date{}
\def\calI{\mathcal{I}}
\def\calU{\mathcal{U}}
\newcommand{\FVD}{\mbox{$F\!V\!D$}}
\begin{document}

\pagestyle{plain}
\pagenumbering{arabic}
\setcounter{page}{1}

\maketitle

\vspace{-0.2in}
\begin{abstract}
We consider a bichromatic two-center problem for pairs of points.
Given a set $S$ of $n$ pairs of points in the plane, for every pair, we want to assign a red color to one point and a blue color to the other, in such a way that the value $\max\{r_1,r_2\}$ is minimized, where $r_1$ (resp., $r_2$) is the radius of the smallest enclosing disk of all red (resp., blue) points.
Previously, an exact algorithm of $O(n^3\log^2 n)$ time and a $(1+\varepsilon)$-approximate algorithm of $O(n + (1/\varepsilon)^6 \log^2 (1/\varepsilon))$ time were known.
In this paper, we propose a new exact algorithm of $O(n^2\log^2 n)$ time 
and a new $(1+\varepsilon)$-approximate algorithm of $O(n + (1/\varepsilon)^3 \log^2 (1/\varepsilon))$ time. 
\end{abstract}

\section{Introduction}
\label{sec:intro}
In this paper, we consider the following bichromatic 2-center problem for pairs of points.
Given a set $S$ of $n$ pairs of points in the plane, for every pair, we want to assign a red color to one point and a blue color to the other, in such a way that the value $\max\{r_1,r_2\}$ is minimized, where $r_1$ (resp., $r_2$) is the radius of the smallest enclosing disk of all red (resp., blue) points.

Previously, Arkin et al.~\cite{ref:ArkinBi15} proposed an $O(n^3\log^2 n)$ time exact algorithm, as well as two $(1+\varepsilon)$-approximate algorithms of time $O((n/\varepsilon^2)\log n\log(1/\varepsilon))$ and $O(n+(1/\varepsilon)^6\log^2(1/\varepsilon))$, respectively.
In this paper, we propose a new exact algorithm of $O(n^2\log^2 n)$ time, which is a linear factor improvement over the exact algorithm in~\cite{ref:ArkinBi15}.
Also, we propose a new $(1+\varepsilon)$-approximate algorithm of $O(n + (1/\varepsilon)^3 \log^2 (1/\varepsilon))$ time, shaving off three $1/\varepsilon$ factors of the second term of the previous $O(n+(1/\varepsilon)^6\log^2(1/\varepsilon))$ time.


\vspace{-0.08in}
\paragraph{Related Work.}
Our problem may be considered as a new type of facility location problem. Facility location problems have been studied extensively in operations research, computational geometry, and other related areas. The classical 1-center problem for a set of points in the plane, which is also the smallest enclosing disk problem, can be solved in linear time~\cite{ref:ChazelleOn96,ref:DyerOn86,ref:MegiddoLi83}.
Our problem may be more closely related to the $2$-center problem for a set of $n$ points in the plane, which has attracted much attention. Hershberger and Suri~\cite{ref:HershbergerFi91} first solved the decision version of the problem in $O(n^2\log n)$ time, which was later improved to $O(n^2)$ time~\cite{ref:HershbergerA93}. Using this result and with parametric search technique~\cite{ref:MegiddoAp83}, Agarwal and Sharir~\cite{ref:AgarwalPl94} gave an $O(n^2\log^3 n)$ time algorithm for the planar $2$-center problem. Later, Jaromczyk and Kowaluk~\cite{ref:JaromczykAn94} proposed an $O(n^2)$ time algorithm. A breakthrough was achieved by Sharir~\cite{ref:SharirA97}, who gave the first-known subquadratic algorithm for the problem, and the running time is $O(n\log^9 n)$. Afterwards, based on Sharir's algorithm scheme~\cite{ref:SharirA97}, Eppstein~\cite{ref:EppsteinFa97} derived a randomized algorithm with $O(n\log^2 n)$ expected time, and then Chan~\cite{ref:ChanMo99} developed an $O(n\log^2 n\log^2 \log n)$ time deterministic algorithm.

As discussed in~\cite{ref:ArkinBi15}, in addition to a natural variant of the planar 2-center problem, the bichromatic 2-center problem is motivated by a chromatic clustering problem arising in certain applications in biology, e.g.,~\cite{ref:DingSo11}, as well as in transportation. For example, suppose we have a set of origin/destination pairs. We want to find two centers to build airports, such that for each origin/destination pair, we can travel from the origin to the destination by first driving to the closer airport, and then flying to the other airport, and finally driving to the destination. If the goal is to minimize the maximum of the driving time, then the problem is exactly an instance of our bichromatic 2-center problem.

The distance in our bichromatic 2-center problem is measured in the Euclidean metric.
Arkin et al.~\cite{ref:ArkinBi15} also considered the same problem in the $L_{\infty}$ metric, which is much easier and is solvable in $O(n)$ time.
In addition, instead of minimizing the maximum radius of the two smallest enclosing disks for red and blue points, Arkin et al.~\cite{ref:ArkinBi15} studied the problem of minimizing the sum of the radii of the two smallest enclosing disks. They gave an $O(n^4\log^2 n)$ time exact algorithm for this min-sum problem in the Euclidean metric, along with two $(1+\varepsilon)$-approximate algorithms, 
and an $O(n\log^2 n)$ time (deterministic) algorithm and an $O(n\log n)$ time randomized algorithm for the same problem in the $L_{\infty}$ metric.
Refer to~\cite{ref:ArkinBi15} for some other variants of the problem.


\vspace{-0.07in}
\paragraph{Outline.}
In Section~\ref{sec:pre}, we introduce some notation. 
We present our exact algorithm in Section~\ref{sec-exact} and present our approximation algorithm in Section~\ref{sec-approx}. 



\section{Preliminaries}
\label{sec:pre}

Let $r^*$ denote the radius of the larger disk in an optimal solution for our bichromatic 2-center problem.
Note that there exists an optimal solution consisting of two congruent disks of radius equal to $r^*$.
We use $OPT$ to denote such an optimal solution in which the distance between the centers of the two disks is minimized.
Let $D_1^*$ and $D_2^*$ be the two disks in $OPT$.

We say that two disks {\em bichromatically cover} $S$ if it is possible to assign a point a red color and the other a blue color for every pair of $S$ such that one disk covers all red points and the other covers all blue points.
To solve our bichromatic 2-center problem, it is sufficient to find two congruent disks of smallest radius that bichromatically cover $S$.


For a subset $S'$ of $S$, we denote by $P(S')$ the set of points in all pairs of $S'$. 

For a connected region $B$ in the plane, let $\partial B$ denote the boundary of $B$.


For any point $c$ in the plane and a value $r$, let $D_r(c)$ denote the disk centered at $c$ with radius $r$. For a set $A$ of points in the plane, define $\calI_r(A)=\bigcap_{c\in A}D_r(c)$, i.e., the common intersection of the disks $D_r(c)$ for all points $c\in A$. Note that $\calI_r(A)$ is convex and can be computed in $O(|A|\log |A|)$ time~\cite{ref:HershbergerFi91}.

For a point pair $(p,p')\in S$ and a value $r$, let $U_r(p,p')$ denote the union of the two disks $D_r(p)$ and $D_r(p')$. For a subset $S'$ of pairs of $S$, define $\calU_r(S')=\bigcap_{(p,p')\in S'}U_r(p,p')$. The following lemma, given by Arkin et al.~\cite{ref:ArkinBi15} (specifically, in Lemma 1), will be used later in our algorithm.

\begin{lemma}[{\text Arkin et al.~\cite{ref:ArkinBi15}}]
\label{lem:arkin}
Given a subset $S'$ of pairs of $S$ and a point $c$ with a value $r$ such that $D_r(c)$ covers all points of $P(S')$, $\calU_r(S')$ can be computed in $O(|S'|\log|S'|)$ time and the combinatorial complexity of $\calU_r(S')$ is $O(|S'|\cdot \alpha(|S'|))$, where $\alpha(\cdot)$ is the inverse Ackermann function.
\end{lemma}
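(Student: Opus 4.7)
My plan is to first exploit the fact that $c$ lies in $\calU_r(S')$ in a very structured way, which will let me reduce the problem to a one-dimensional envelope problem. Since $D_r(c)$ covers $P(S')$, for every $(p,p')\in S'$ we have $|cp|\le r$ and $|cp'|\le r$, so $c\in D_r(p)\cap D_r(p')\subseteq U_r(p,p')$, and therefore $c\in\calU_r(S')$. Moreover, each $U_r(p,p')$ is star-shaped with respect to $c$: any $q\in U_r(p,p')$ lies in one of the convex disks $D_r(p)$ or $D_r(p')$, so the whole segment $cq$ lies in that disk and hence in $U_r(p,p')$. An intersection of sets star-shaped about a common point is star-shaped about that point, so $\calU_r(S')$ itself is star-shaped about $c$.

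Next I would parameterize the boundary radially from $c$. For $p\in P(S')$ let $\rho_p(\theta)$ denote the distance from $c$ to the unique exit point of the ray from $c$ in direction $\theta$ out of $D_r(p)$; this is well defined because $c\in D_r(p)$, and the graph of $\rho_p$ is an arc of a circle of radius $r$. A point at polar coordinates $(\theta,r)$ lies in $U_r(p,p')$ exactly when $r\le\max\{\rho_p(\theta),\rho_{p'}(\theta)\}$. Using star-shapedness, the boundary of $\calU_r(S')$ is therefore the one-dimensional lower envelope
\[
\rho(\theta)\;=\;\min_{(p,p')\in S'}\max\{\rho_p(\theta),\rho_{p'}(\theta)\}.
\]

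For the combinatorial bound, I would invoke a Davenport--Schinzel argument on the $2|S'|$ underlying circular arcs $\rho_p$. Any two such arcs correspond to distinct circles of the same radius $r$ and hence meet in at most two points, which (combined with the structure coming from the outer $\min$ of inner $\max$'s) yields an envelope of complexity $O(\lambda_s(|S'|))=O(|S'|\cdot\alpha(|S'|))$ for a suitable small constant $s$. For the running time, I would use divide and conquer on $S'$: split $S'$ into two halves, recursively compute their respective envelopes, and merge the two star-shaped regions by a single angular sweep around $c$ that scans both polar representations in sorted $\theta$ order, at each event updating which arc currently bounds the intersection. The merge is linear in the combined complexity of the two pieces, and the recurrence $T(n)=2T(n/2)+O(n\alpha(n))$ resolves to $O(|S'|\log|S'|)$.

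The main obstacle I expect is the Davenport--Schinzel analysis in the third step. Each inner function $\max\{\rho_p,\rho_{p'}\}$ is itself a piecewise-arc function with up to three breakpoints, so a pair of such min-max functions can alternate several times on the lower envelope; obtaining the sharp $O(n\alpha(n))$ bound, rather than a weaker $\lambda_s$-type bound with a larger $s$, will require a careful argument that treats the two arcs inside each $\rho_{p,p'}$ as separate tokens of a single label when analyzing alternations, so that the envelope effectively behaves like a $\lambda_3$-sequence over $2|S'|$ arcs.
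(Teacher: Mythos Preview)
The paper does not give its own proof of this lemma; it is quoted from Lemma~1 of Arkin et al.~\cite{ref:ArkinBi15} and used as a black box throughout. So there is no in-paper argument to compare against.

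Your outline is the natural one and is almost certainly the route the original proof takes: star-shapedness about $c$, a radial parameterisation that turns $\partial\calU_r(S')$ into the one-dimensional lower envelope of the functions $\theta\mapsto\max\{\rho_p(\theta),\rho_{p'}(\theta)\}$, a Davenport--Schinzel argument for the complexity, and divide-and-conquer with an angular-sweep merge for the running time.

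That said, the step you yourself flag as the obstacle is a genuine gap as written. Two circles of equal radius meet in up to two points, so viewing the boundary as the lower envelope of $2|S'|$ partially defined arcs yields, a priori, only a $\lambda_4$-type bound $O(|S'|\cdot 2^{\alpha(|S'|)})$, not the claimed $O(|S'|\,\alpha(|S'|))$. Your proposed fix --- treating the two arcs of each pair as ``separate tokens of a single label'' --- does not by itself force order~$3$: what controls the DS order is how many times two labels can alternate on the envelope, and with up to two intersections per arc-pair and two arcs per label that alternation count is not obviously bounded by three. Closing this gap requires a further geometric argument that exploits the special structure here (all $2|S'|$ circles have the same radius $r$ and all contain the common point $c$); that is where the real content of the Arkin et al.\ proof sits, and your proposal stops just short of it. A minor side remark: the recurrence $T(n)=2T(n/2)+O(n\alpha(n))$ solves to $O(n\,\alpha(n)\log n)$, not $O(n\log n)$, though this is harmless if one is willing to absorb $\alpha(n)$ into the constant.
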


\paragraph{Remark.}
In our algorithm, we often need to solve the following subproblem. Let $S'$, $c$, and $r$ be specified as in Lemma~\ref{lem:arkin}. Let $A$ be a set of $O(n)$ points in the plane. The problem is to determine whether  $\calU_r(S')\cap \calI_r(A)$ is empty. The problem can be solved in $O(n\log n)$ time~\cite{ref:ArkinBi15} (specifically, Lemma~1), as follows. We first compute $\calU_r(S')$ and $\calI_r(A)$ in $O(n\log n)$ time as discussed above. Then, since $\calI_r(A)$ is convex and $\calU_r(S')$ is star-shaped with respect to the point $c$, checking whether $\calU_r(S')\cap \calI_r(A)=\emptyset$ can be done in additional $O(n\alpha(n))$ time by an angular sweeping around the point $c$ (see Lemma 1 in~\cite{ref:ArkinBi15} for more details).
Note that we can also slightly change the algorithm to check whether the interior of $\calU_r(S')$ intersects the interior of $\calI_r(A)$ in the same time asymptotically as above.

\section{The Exact Algorithm} \label{sec-exact}
Before describing our algorithm in detail, we first give an overview of our approach.
To obtain the $O(n^3\log^2 n)$ time algorithm for the problem, Arkin et al.~\cite{ref:ArkinBi15} first solved in $O(n^3\log n)$ time the decision version of the problem: Given a value $r$, decide whether $r\geq r^*$. Then, an easy observation is that $r^*$ is equal to the radius of the circumcircle of two or three points of $S$, and thus one can easily form a set of $O(n^3)$ candidate values for $r^*$. Consequently, $r^*$ can be found in the set by binary search using the decision algorithm.

We take a different approach. As our problem is closely related to the planar 2-center problem for a set of points, we follow the algorithmic scheme in~\cite{ref:ChanMo99,ref:EppsteinFa97,ref:SharirA97} for the planar 2-center problem. 
More specifically, as in~\cite{ref:ChanMo99,ref:EppsteinFa97}, let $\delta^*$ be the distance of the centers of the two disks $D^*_1$ and $D^*_2$ in $OPT$.  
We consider two cases. If $\delta^*\geq r^*$, we call it the {\em distant case}; otherwise, it is the {\em nearby case}.

In the distant case, as for the planar 2-center problem~\cite{ref:ChanMo99,ref:EppsteinFa97,ref:SharirA97}, we can determine a constant number of lines such that at least one line $l$ has the following property (e.g., see Fig.~\ref{fig:casedis}):
The subset of points of $S$ on one side of $l$ (say, the left side) are contained in one disk, say, $D^*_1$, of the optimal solution, such that the subset has a point on the boundary of $D^*_1$ and $D^*_1$ is the circumcircle of two or three points of $S$. By using this observation, we first solve the decision problem of this case in $O(n^2\log n)$ time. Then, following a similar algorithm scheme to that in~\cite{ref:EppsteinFa97} and using our decision algorithm, we compute $r^*$ in $O(n^2\log^2 n)$ time  using parametric search~\cite{ref:ColeSl87,ref:MegiddoAp83}.

\begin{figure}[t]
\begin{minipage}[t]{0.49\textwidth}
\begin{center}
\includegraphics[height=1.5in]{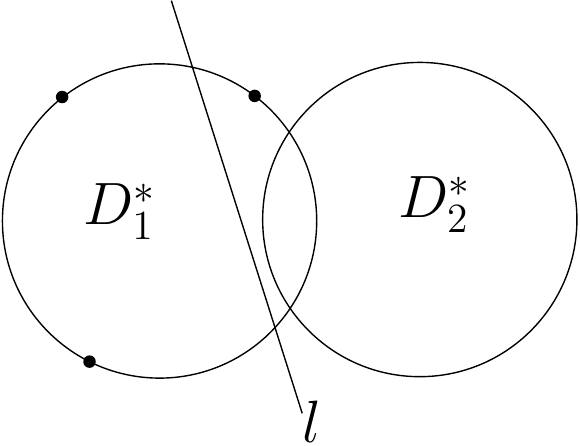}
\caption{\footnotesize Illustrating the distant case.
}
\label{fig:casedis}
\end{center}
\end{minipage}
\hspace{0.02in}
\begin{minipage}[t]{0.49\textwidth}
\begin{center}
\includegraphics[height=1.5in]{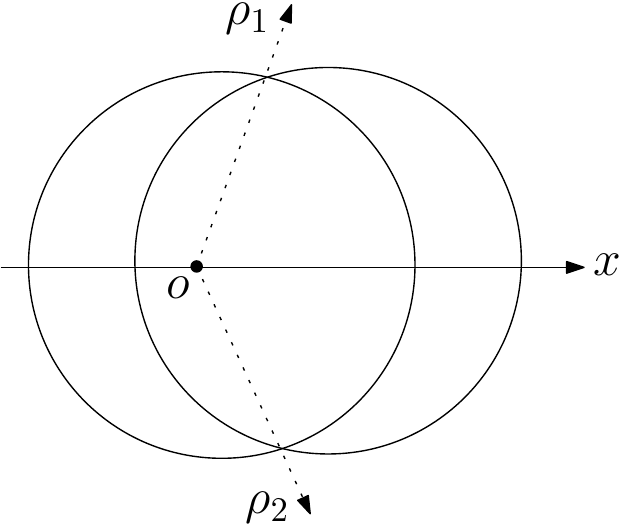}
\caption{\footnotesize Illustrating the nearby case.}
\label{fig:casenear}
\end{center}
\end{minipage}
\end{figure}

In the nearby case, as for the planar 2-center problem~\cite{ref:ChanMo99,ref:EppsteinFa97,ref:SharirA97}, we can determine a constant number of points such that at least one point $o$ is contained in the intersection of $D^*_1$ and $D^*_2$ (e.g., see Fig.~\ref{fig:casenear}). In this case, we sort all points of $S$ cyclically around $o$ and form a matrix $M$ of size $\Theta(n^2)$, such that $r^*$ is the smallest element in $M$. The similar approach is also used in~\cite{ref:ChanMo99,ref:EppsteinFa97,ref:SharirA97}. The difference, however, is that it is quite challenging to evaluate a matrix element in our problem. To this end, we first solve the decision problem in $O(n\log n)$ time and then solve the optimization problem (i.e., computing the matrix element) in $O(n\log^2 n)$ by parametric search~\cite{ref:ColeSl87,ref:MegiddoAp83}. Then, with help of an observation on the monotonicity properties of the matrix $M$, we find $r^*$ in $M$ in $O(n^2\log^2 n)$ time without evaluating all elements of $M$ (more precisely, we only need to evaluate $O(n)$ elements), by a matrix searching technique~\cite{ref:EppsteinFa97,ref:FredericksonTh82,ref:FredericksonGe84}.

Given the set $S$, because we do not know which case happens, we will simply run our algorithms for the above two cases and then return the best solution.

Comparing with the planar 2-center problem~\cite{ref:ChanMo99,ref:EppsteinFa97,ref:SharirA97}, a main challenge in our problem is that we do not have an efficient data structure to dynamically compute certain values needed in the algorithm (e.g., the elements of the matrix $M$) in poly-logarithmic time each. Instead, in most cases we have to spend more than linear time on computing each such value. This is a main obstacle that prevents us from achieving a subquadratic time algorithm for our problem.
In the next two sections, we consider the distant case and the nearby case, respectively.

\subsection{The Distant Case}
\label{sec:dist}


In this case (i.e., $\delta^*\geq r^*$), the two disks $D_1^*$ and $D^*_2$ in $OPT$ are relatively far from each other, and they may intersect or not.
As shown in~\cite{ref:EppsteinFa97}, after the smallest enclosing disk of all points of $P(S)$ is obtained, which can be done in $O(n)$ time~\cite{ref:ChazelleOn96,ref:DyerOn86,ref:MegiddoLi83}, we can determine in constant time a set $L$ of $O(1)$ lines such that at least one line $l\in L$ must have the following property: The subset $P_1$ of the points of $P(S)$ on one particular side (e.g., the left side) of $l$ are contained in one disk of $OPT$ such that a point of $P_1$ is on the boundary of the disk and the disk is the circumcircle of two or three points of $P(S)$ (e.g., see Fig.~\ref{fig:casedis}).

With $L$, because we do not know which line of $L$ and which side of the line has the above property, we will run the following algorithm for the subset $P_1$ for each side of every line of $L$, and finally return the best solution. In the following, we give our algorithm by assuming that we know the line $l$ as well as the set $P_1$ with the property stated above.

We first consider the decision problem: Given a value $r$, decide whether $r\geq r^*$. 
The property of $P_1$ leads to the following observation.

\begin{observation}\label{obser:10}
$r\geq r^*$ if and only if there exist two congruent disks of radius $r$ bichromatically covering $S$ such that one disk contains all points of $P_1$ and has one point of $P_1$ on its boundary.
\end{observation}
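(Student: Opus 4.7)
The plan is to prove the two implications separately. The ``if'' direction is essentially by definition: if two congruent disks of radius $r$ bichromatically cover $S$, then the minimality of $r^*$ forces $r \geq r^*$, so no further work is needed.

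For the ``only if'' direction, suppose $r \geq r^*$ and let $c_1^*, c_2^*$ denote the centers of $D_1^*, D_2^*$, respectively. By the property of $P_1$ established just before the observation, some point $p_0 \in P_1$ satisfies $|p_0 - c_1^*| = r^*$. The main step of the construction is a single translation: let $c_1$ be the point obtained by translating $c_1^*$ by distance $r - r^*$ along the ray from $p_0$ through $c_1^*$ (i.e., directly away from $p_0$). Two triangle-inequality checks then finish the argument. First, $|c_1 - p_0| = r^* + (r - r^*) = r$, so $p_0$ lies on $\partial D_r(c_1)$. Second, for every $q \in D_1^*$ we have $|c_1 - q| \leq |c_1 - c_1^*| + |c_1^* - q| \leq (r - r^*) + r^* = r$, so $D_r(c_1) \supseteq D_1^* \supseteq P_1$. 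Pairing $D_r(c_1)$ with $D_r(c_2^*) \supseteq D_2^*$, the bichromatic coloring used by $OPT$ remains valid, and we obtain two congruent radius-$r$ disks that bichromatically cover $S$ with $P_1 \subseteq D_r(c_1)$ and $p_0 \in P_1 \cap \partial D_r(c_1)$.

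The case $r = r^*$ is absorbed into the same construction, since the shift then has length zero and we recover $OPT$ itself. The only minor subtlety is that the translation direction is well-defined because $|p_0 - c_1^*| = r^* > 0$, with the degenerate case $r^* = 0$ being trivial. I do not expect any real obstacle here: the identity $(r - r^*) + r^* = r$ makes the single shift simultaneously long enough to push $p_0$ onto $\partial D_r(c_1)$ and short enough to preserve coverage of the entire original disk $D_1^*$, which is all the bichromatic cover property requires.
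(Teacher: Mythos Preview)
Your proposal is correct and follows essentially the same approach as the paper: the paper's proof simply says the pair is obtained ``by enlarging the two disks in $OPT$'', and your explicit translation of $c_1^*$ away from $p_0$ (together with the concentric enlargement of $D_2^*$) is exactly a concrete realization of that enlargement, with the triangle inequality check making the containment $D_1^* \subseteq D_r(c_1)$ precise.
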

\begin{proof}
If there exists such a pair of congruent disks of radius $r$ as stated in the observation, then its trivially true that $r\geq r^*$.

If $r\geq r^*$, then the property of $P_1$ implies we can obtain such a pair of disks as stated in the observation by enlarging the two disks in $OPT$.
\end{proof}

We first compute the common intersection $\calI_r(P_1)$, which can be done in  $O(n\log n)$ time as discussed in Section~\ref{sec:pre}.
Then, for each point $c\in P(S)\setminus P_1$, we compute the intersection $\partial \calI_r(P_1)\cap \partial D_r(c)$, which consists of at most two points as argued in~\cite{ref:HershbergerFi91}, and can be done in $O(\log n)$ time since $\calI_r(P_1)$ is convex~\cite{ref:HershbergerFi91}. We sort these intersection points and the vertices of $\calI_r(P_1)$, along $\partial\calI_r(P_1)$, into a list $I$, which can be done in $O(n\log n)$ time since $|I|= O(n)$.

We run a scanning procedure to scan the list of $I$. For each point $c\in I$, we process it as follows.
We place a disk of radius $r$ centered at $c$, i.e., $D_r(c)$.  We wish to answer the following question: Whether do there exist two congruent disks of radius $r$ bichromatically covering $S$ such that one of them is $D_r(c)$? This can be done in $O(n\log n)$ time, as follows.

First, in $O(n)$ time, we check whether $D_r(c)$ contains at least one point from each pair of $S$. If no, then the answer to the above question is negative and the processing of the point $c$ is done (and we proceed to process the next point of $I$). Otherwise, we proceed as follows.
Let $S(c)$ be the subset of pairs of $S$ whose points are both covered by $D_r(c)$. 
Let $P(c)$ denote the subset of points of $P(S)$ not covered by $D_r(c)$.
To answer the question, it is now sufficient to determine whether there exists a disk of radius $r$ containing all points of $P(c)$ and at least one point from each pair of $S(c)$.
To this end, we first compute $\calU_r(S(c))$, which can be done in $O(n\log n)$ time by Lemma~\ref{lem:arkin},
since every point of $S(c)$ is covered by $D_r(c)$. Next, we compute $\calI_r(P(c))$ in $O(n\log n)$ time. Finally, we determine whether $\calU_r(S(c))\cap \calI_r(P(c))$ is empty, which can be done in $O(n\log n)$ time as remarked in Section~\ref{sec:pre}. Note that the answer to our question is positive if and only if $\calU_r(S(c))\cap \calI_r(P(c))$ is not empty.

If the answer to our question is positive, then we stop our decision algorithm with the assertion that $r\geq r^*$, in which case two congruent disks of radius $r$ that bichromatically cover $S$ are also obtained as implied by the above algorithm. Otherwise, we continue on the next point of $I$. If the answer to the question is negative for all points of $I$, then we stop with the assertion that $r< r^*$.
Observation~\ref{obser:10} guarantees the correctness of the algorithm.

Since $|I|=O(n)$ and processing each point of $I$ takes $O(n\log n)$ time, the total time of the  algorithm is $O(n^2\log n)$.


With the decision algorithm, in Lemma~\ref{lem:distantopt} we solve the optimization problem, i.e., computing $r^*$, in $O(n^2\log^2 n)$ time using parametric search~\cite{ref:ColeSl87,ref:MegiddoAp83}. The parametric search scheme is almost the same as that in~\cite{ref:EppsteinFa97} (i.e., in Section 4). 

\begin{lemma}\label{lem:distantopt}
    An optimal solution can be computed in $O(n^2\log^2 n)$ time. 
\end{lemma}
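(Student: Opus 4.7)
The plan is to lift the $O(n^2\log n)$-time decision algorithm of Section~\ref{sec:dist} to an $O(n^2\log^2 n)$-time optimization algorithm via parametric search~\cite{ref:ColeSl87,ref:MegiddoAp83}, essentially following the scheme that Eppstein~\cite{ref:EppsteinFa97} used for the planar 2-center problem. The key ingredients are (i) a parallel simulation of the decision algorithm with small parallel time and (ii) Cole's version of parametric search, which avoids the usual $\log P$ factor.

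First I would design a parallel version of the decision algorithm with parallel time $T = O(\log n)$ using $P = O(n^2)$ processors. The four phases of the sequential algorithm admit the following parallelizations. The common intersection $\calI_r(P_1)$ of $O(n)$ congruent disks can be built in $O(\log n)$ time with $O(n)$ processors by standard parallel divide-and-conquer. The $O(n)$ intersection points $\partial\calI_r(P_1)\cap\partial D_r(c)$ for $c\in P(S)\setminus P_1$ are computed in parallel by $O(n)$ independent binary searches on the convex boundary. Merging them with the vertices of $\calI_r(P_1)$ into the sorted list $I$ is done by Cole's parallel sorting in $O(\log n)$ time with $O(n)$ processors. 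Finally, the $|I|=O(n)$ candidate centers of $I$ are processed simultaneously: for each $c\in I$, a team of $O(n)$ processors computes $\calU_r(S(c))$ and $\calI_r(P(c))$ and the emptiness test $\calU_r(S(c))\cap\calI_r(P(c))\stackrel{?}{=}\emptyset$ of the remark following Lemma~\ref{lem:arkin} in $O(\log n)$ time by parallelizing the angular sweep around the anchor point. The total processor count is $P=O(n^2)$.

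With the sequential decision algorithm (time $D=O(n^2\log n)$) and this parallel simulation in hand, I would invoke Cole's parametric search. Each round of the parallel simulation contains $O(P)$ comparisons whose sign depends on the unknown $r^*$; a weighted-median-of-medians pivot is evaluated by one call to the sequential decision algorithm, which either discards a constant fraction of the active comparisons or tightens the interval enclosing $r^*$. Because the parallel algorithm runs in $T=O(\log n)$ rounds, Cole's bound yields total time
\[
O(PT)+O(D\cdot T)=O(n^2\log n)+O(n^2\log n\cdot \log n)=O(n^2\log^2 n).
\]
The search is repeated for each of the $O(1)$ candidate lines $l\in L$ and each side of $l$, and among the values of $r$ returned we take the smallest for which the decision procedure succeeds; this preserves the $O(n^2\log^2 n)$ bound and delivers the corresponding optimal pair of congruent disks.

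The main obstacle will be the parallelization of the per-candidate processing step. Constructing $\calU_r(S(c))$ as in Lemma~\ref{lem:arkin} and carrying out the star-shaped/convex intersection check both rely on sequential sweeping procedures; I would need to replace these by parallel upper-envelope and sorted-merge primitives while verifying that every comparison generated on the way is algebraic of bounded degree in $r$, so that Cole's generic comparison-resolution mechanism applies. Once this routine parallelization is in place, the time bound follows by a direct application of the Eppstein framework.
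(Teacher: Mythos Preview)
Your approach is in the right spirit (parametric search on top of the decision procedure, following Eppstein), but it is both more ambitious and less complete than what the paper actually does, and the gap you yourself flag is a real one.

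The paper does \emph{not} parallelize the full decision algorithm. Instead it parallelizes only the ``easy'' front end: it first uses the farthest-point Voronoi diagram $\FVD(P_1)$ to obtain $O(n)$ critical radii $r_e$, binary-searches among them with the sequential decision algorithm ($O(\log n)$ calls, $O(n^2\log^2 n)$ total) to pin down the combinatorial structure of $\calI_{r^*}(P_1)$, and then runs Cole's parametric search solely on the $O(n)$ parallel binary searches that locate the intersections $\partial\calI_{r^*}(P_1)\cap\partial D_{r^*}(p)$ together with the subsequent sort that produces the list $I$. That is pure binary-search-plus-sort, exactly the setting in which Cole's speedup is known to apply. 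The crucial point, borrowed from Eppstein's Lemma~4.2, is that the combinatorial description of $I$ undergoes a discrete change precisely at $r=r^*$, so $r^*$ is guaranteed to appear among the values the decision algorithm is invoked on during this partial simulation. There is no need to simulate the per-candidate phase at all.

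Your plan, by contrast, requires parallelizing the processing of every $c\in I$: building $\calU_r(S(c))$, building $\calI_r(P(c))$, and the angular-sweep emptiness test. You acknowledge this is the main obstacle, and it is not just routine. The sets $S(c)$ and $P(c)$ are defined by combinatorial branching (which points lie in $D_r(c)$), the anchor point $c$ is itself a function of $r$, and Cole's $O(D\cdot T)$ bound needs the parallel algorithm to be sorting-like; upper-envelope constructions and star-shaped/convex sweeps are not obviously of that form. Without working this out, your time analysis is not justified. The paper's route sidesteps all of this by exploiting the discrete-change observation, which is the idea you are missing.
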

\begin{proof}
We use a parametric search scheme that is almost the same as that in~\cite{ref:EppsteinFa97} (i.e., in Section 4). For completeness, we briefly discuss it below.

First of all, although we do not know $r^*$, we need to determine the combinatorial structure of $\calI_{r^*}(P_1)$, i.e., the points of $P_1$ that define the arcs of $\partial\calI_{r^*}(P_1)$.
For this, notice that if we place a disk $D$ of radius $r^*$ centered at a vertex of $\calI_{r^*}(P_1)$, then $D$ has two points of $P_1$ on its boundary and contains all other points of $P_1$. Therefore, such a disk $D$ has its center on an edge of the farthest Voronoin diagram of the points of $P_1$, denoted by $\FVD(P_1)$. As such, we compute $\FVD(P_1)$, and for each edge $e$ of $\FVD(P_1)$, we determine an interval (for the parameter $r$ in the decision algorithm), in the following way. Note that $e$ is actually a half-line~\cite{ref:deBergCo08}, which is on a bisector of two points $p$ and $q$ of $P_1$. Suppose $p_e$ is the only endpoint of $e$. We keep the following interval $[r_e,\infty)$ for $e$, where $r_e$ is equal to the distance between $p_e$ and $p$. Observe that if we place a disk $D$ of radius $r\in [r_e,\infty)$ at a point of $e$ whose distance from $p$ is $r$, then $D$ will cover all points of $P_1$ and have both $p$ and $q$ on its boundary.

As $\FVD(P_1)$ has $O(n)$ edges, we can determine $O(n)$ intervals $[r_e,\infty)$ as above. Then, we sort all those $r_e$ values, and do binary search on the sorted list using our decision algorithm to obtain an interval $(r_1,r_2]$ that contains $r^*$. This steps takes $O(n^2\log^2 n)$ time because the decision algorithm is called $O(\log n)$ times.
Next, we pick any value $r\in (r_1,r_2]$ and construct $\calI_r(P_1)$, which has the same combinatorial structure as $\calI_{r^*}(P_1)$ according to the above discussion.

The decision algorithm would perform a scanning procedure on $\partial\calI_{r^*}(P_1)$, if we knew $r^*$. For this, we need to determine the combinatorial intersections of $\partial \calI_{r^*}(P_1)$ and $\partial D_{r^*}(p)$ for every $p\in P(S)\setminus P_1$ by binary search. More specifically, we need to determine the at most two points of $P_1$ such that the intersections $\partial \calI_{r^*}(P_1)$ and $\partial D_{r^*}(p)$ lie on the arcs of $\partial \calI_{r^*}(P_1)$ defined by the two points.
All these intersections can be computed by $O(n)$ parallel binary search operations.
Afterwards, we sort these intersections, along with the vertices of $\calI_{r^*}(P_1)$.
All these are suitable for Cole's parametric search speedup~\cite{ref:ColeSl87},
which takes $O(n\log n)$ time, in addition to $O(\log n)$ calls on the decision algorithm.
It is shown in \cite{ref:EppsteinFa97} (Lemma 4.2) the behavior of the parametric search undergoes a discrete change at $r=r^*$, meaning that $r=r^*$ will be tested by the decision algorithm. Therefore, among all values $r$ larger than or equal to $r^*$ tested by the decision algorithm in the parametric search, the smallest one is $r^*$.

As such, the total time for computing $r^*$ is $O(n^2\log^2 n)$. After $r^*$ is computed, we can apply our decision algorithm with $r=r^*$ to find two congruent disks of radius $r$ that bichromatically cover $S$ as our optimal solution.
\end{proof}

\subsection{The Nearby Case}
\label{sec:near}


In this case (i.e., $\delta^*<r^*$), the centers of the two disks $D_1^*$ and $D_2^*$ of $OPT$ are relatively close and the two disks must intersect. As shown in~\cite{ref:EppsteinFa97,ref:SharirA97}, after the smallest enclosing disk of $P(S)$ is computed, we can determine in constant time a set of $O(1)$ points such that one point $o$ must be in $D_1^*\cap D^*_2$. Because we do not know which point has the property, we will run the following algorithm for each such point as $o$, and then return the best solution. In the following, we assume that the point $o$ has the property. We make $o$ as the origin of the plane.

Note that $\partial D^*_1$ and $\partial D^*_2$ have exactly two intersections, and let $\rho_1$ and $\rho_2$ be the two rays through these intersections emanating from $o$ (e.g., see Fig.~\ref{fig:casenear}). As argued in~\cite{ref:ChanMo99}, one of the two coordinate axes must separate $\rho_1$ and $\rho_2$ since the angle between the two rays lies in $[\pi/2,3\pi/2]$, and without loss of generality, we assume it is the $x$-axis. Again, because we do not know which axis separates the two rays, we will run the following algorithm once for the $x$-axis and once for the $y$-axis, and then return the best solution. In the following, we present the algorithm by assuming that it is the $x$-axis.

For ease of exposition, we make a general position assumption that no point of $P(S)$ has the same $y$-coordinate as $o$ and no two points of $P(S)$ are collinear with $o$. The degenerate case can still be solved by our technique, but the discussion would be more tedious.

Let $P^+$ denote the subset of points of $P(S)$ above the $x$-axis, and $P^-$ the subset below the $x$-axis. To simplify the discussion, let $|P^+|=|P^-|=n$. Let $p_1,p_2,\ldots, p_n$ be the sorted list of the points of $P^+$ counterclockwise around $o$, and $q_1,q_2,\ldots, q_n$ the sorted list of the points of $P^-$ also counterclockwise around $o$ (e.g., see Fig.~\ref{fig:sort}).
For each $i=0,1,\ldots, n$ and $j=0,1,\ldots,n$, define $L_{ij}=\{p_{i+1}\ldots, p_n, q_1,\ldots,q_j\}$ and $R_{ij}=\{q_{j+1},\ldots,q_n,p_1,\ldots,p_i\}$. Note that if $i=n$, then $L_{ij}=\{q_1,\ldots,q_j\}$, and if $j=n$, then $R_{ij}=\{p_1,\ldots,p_i\}$. In other words, if we consider a ray emanating from $o$ and between $p_i$ and $p_{i+1}$ and another ray emanating from $o$ and between $q_j$ and $q_{j+1}$, then $L_{ij}$ (resp., $R_{ij}$) consisting of all points to the left (resp., right) of the two rays (e.g., see Fig.~\ref{fig:sort}).

\begin{figure}[t]
\begin{minipage}[t]{\textwidth}
\begin{center}
\includegraphics[height=1.7in]{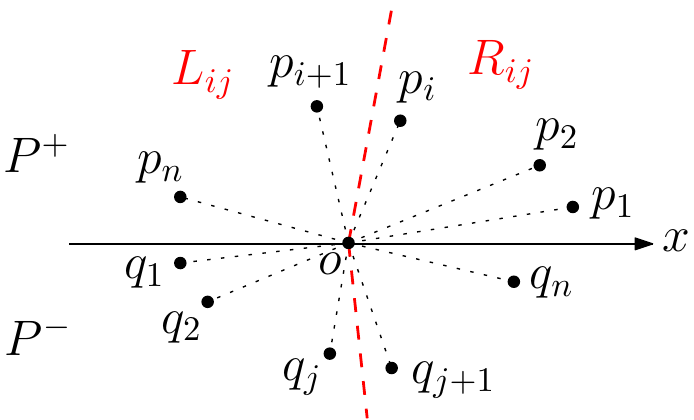}
\caption{\footnotesize Illustrating the points of $P^+$ and $P^-$.}
\label{fig:sort}
\end{center}
\end{minipage}
\end{figure}

For any pair $(i,j)$ with $0\leq i, j\leq n$, we consider the following {\em restricted bichromatic 2-center problem}. Find a pair of two congruent disks $D_1$ and $D_2$ of the smallest radius such that the following hold: (1) $D_1$ and $D_2$ bichromatically cover $S$; (2) $D_1$ covers all points of $L_{ij}\cup \{o\}$ and $D_2$ covers all points of $R_{ij}\cup \{o\}$. We let $r^*_{ij}$ denote the radius of the two disks in an optimal solution. We use $RB2C(i,j)$ to refer to the problem. If a pair of disks satisfies the above two conditions, then we call them a {\em feasible} pair of disks for $RB2C(i,j)$.

The following lemma shows why we need to consider the problem $RB2C(i,j)$.

\begin{lemma}\label{lem:opt}
$r^*=\min_{0\leq i, j\leq n} r^*_{ij}$.
\end{lemma}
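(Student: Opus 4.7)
The plan is to prove both inequalities separately. The bound $r^*\le\min_{i,j}r^*_{ij}$ is immediate: for every $(i,j)$, condition (1) in the definition of $RB2C(i,j)$ already forces any feasible pair of disks to bichromatically cover $S$, so such a pair certifies $r^*\le r^*_{ij}$.

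For the reverse direction $r^*\ge\min_{i,j}r^*_{ij}$, I would exhibit specific indices $(i^*,j^*)$ for which the pair $(D_1^*,D_2^*)$ from $OPT$ is itself feasible for $RB2C(i^*,j^*)$, which immediately gives $r^*_{i^*,j^*}\le r^*$. Let $i^*$ be the unique index such that $\rho_1$ lies angularly between $p_{i^*}$ and $p_{i^*+1}$ in counterclockwise order around $o$ (taking $i^*=0$ if $\rho_1$ precedes $p_1$ and $i^*=n$ if it follows $p_n$), and define $j^*$ analogously from $\rho_2$ in the lower half plane. With this choice, $L_{i^*,j^*}$ is precisely the set of points of $P(S)$ lying in the closed angular sector from $o$ bounded by $\rho_1$ and $\rho_2$ and containing the negative $x$-axis; call this sector $\Lambda$ and its opposite $\Psi$ (which similarly contains $R_{i^*,j^*}$).

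The crux is the geometric claim that, after possibly swapping the labels of $D_1^*$ and $D_2^*$, $(D_1^*\cup D_2^*)\cap\Lambda=D_1^*\cap\Lambda$ and $(D_1^*\cup D_2^*)\cap\Psi=D_2^*\cap\Psi$. To prove it, note that $o$ lies in the interior of both disks and $\rho_1,\rho_2$ pass through the two intersection points $a,b$ of $\partial D_1^*$ and $\partial D_2^*$; since a ray from an interior point of a convex set meets its boundary at a single point, $\rho_1$ exits both $D_1^*$ and $D_2^*$ precisely at $a$ and $\rho_2$ exits both precisely at $b$, and neither ray reenters either disk. Consequently $\{a,b\}$ splits each of $\partial D_1^*,\partial D_2^*$ into two arcs, and each arc lies entirely in one of $\Lambda,\Psi$ (since the arcs cannot cross $\rho_1$ or $\rho_2$ again). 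The outer boundary $\partial(D_1^*\cup D_2^*)$ consists of exactly one arc from each circle (the one not contained in the opposite disk), and a pigeonhole argument---$D_1^*\cup D_2^*$ is bounded and contains a neighborhood of $o$ reaching into both sectors, so a bounding arc is needed in each sector to close off the region there---forces the two outer arcs to lie in different sectors. After relabeling so that the outer arc of $D_1^*$ lies in $\Lambda$, the region $(D_1^*\cup D_2^*)\cap\Lambda$ is bounded by this arc together with the ray segments $oa,ob$, which coincides exactly with $D_1^*\cap\Lambda$; in particular $D_2^*\cap\Lambda\subseteq D_1^*$, and symmetrically $D_1^*\cap\Psi\subseteq D_2^*$.

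The lemma then follows immediately. Every $p\in L_{i^*,j^*}$ lies in $\Lambda$ and in $D_1^*\cup D_2^*$ (since $OPT$ covers $P(S)$), hence in $D_1^*$; symmetrically every point of $R_{i^*,j^*}$ lies in $D_2^*$; and $o\in D_1^*\cap D_2^*$ by assumption. Thus $(D_1^*,D_2^*)$ satisfies condition (2) of $RB2C(i^*,j^*)$, while condition (1) is inherited from $OPT$, yielding $r^*_{i^*,j^*}\le r^*$ and therefore $r^*=\min_{i,j}r^*_{ij}$. I expect the main obstacle to be making the geometric claim precise when $o$ does not lie on the chord $ab$: in that case the angular sectors from $o$ do not coincide with the two half planes determined by $ab$, so one cannot simply invoke the classical decomposition of two congruent disks across their radical axis, and must instead rely on the topological observation that each ray meets $\partial D_k^*$ at a single point to argue which sector contains each outer arc.
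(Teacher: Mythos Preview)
Your proposal is correct and follows essentially the same approach as the paper: both prove $r^*\le\min_{i,j}r^*_{ij}$ trivially from feasibility, and for the reverse inequality both select $(i^*,j^*)$ according to where $\rho_1$ and $\rho_2$ fall among the angularly sorted points, then argue that $(D_1^*,D_2^*)$ is feasible for $RB2C(i^*,j^*)$. The only difference is one of detail: the paper simply asserts that ``according to the definitions of $\rho_1$ and $\rho_2$, one disk of $OPT$ contains all points of $L_{ij}$ and the other contains all points of $R_{ij}$,'' whereas you supply an explicit geometric argument (each ray from $o$ meets each boundary circle exactly once, so each outer arc of $D_1^*\cup D_2^*$ lies in a single angular sector, and boundedness forces the two outer arcs into opposite sectors). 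Two minor points: you tacitly assume $o$ lies in the \emph{interior} of both disks (the paper only places $o$ in $D_1^*\cap D_2^*$), and you should state that without loss of generality $\rho_1$ lies in the upper half-plane; both are harmless but worth noting.
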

\begin{proof}
First of all, $r^*\leq r^*_{ij}$ holds for any $(i,j)$ with $0\leq i, j\leq n$, for $r^*_{ij}$ is the radius of two congruent disks that bichromatically cover $S$. Hence,
$r^*\leq \min_{0\leq i, j\leq n} r^*_{ij}$. Below, we prove that
$r^*\geq \min_{0\leq i, j\leq n} r^*_{ij}$. It is
sufficient to show that $r^*\geq r^*_{ij}$ for some pair $(i,j)$.

Recall that the two optimal disks $D_1^*$ and $D_2^*$ of $OPT$ both contain the
point $o$.
Suppose the ray $\rho_1$ is between the two points $p_{i+1}$ and $p_i$
in $P^+$ (let $i=0$ if all points of $P^+$ are to the left of
$\rho_1$, and $i=n$ if all points of $P^+$ are to the right of
$\rho_1$); similarly, suppose $\rho_2$ is between two points $q_{j}$
and $q_{j+1}$. According to the definitions of $\rho_1$ and $\rho_2$,
one disk of $OPT$, say $D^*_1$, contains all points of $L_{ij}$, and the
other disk $D^*_2$ contains all points of $R_{ij}$. Further, $o\in D_1^*\cup D_2^*$. Therefore, the two
disks of $OPT$ form a feasible pair for the problem $RB2C(i,j)$. Because $r^*$ is the radius of $D^*_1$ and $D^*_2$, $r^*\geq r^*_{ij}$ must hold.
\end{proof}

Define an $(n+1)\times (n+1)$ matrix $M[0\ldots n; 0\ldots n]$, where
$M[i,j]=r^*_{ij}$ for all $0\leq i, j\leq n$.
By Lemma~\ref{lem:opt}, $r^*$ is equal to the minimum element in $M$.
To find $r^*$ from $M$, instead of computing all $(n+1)^2$ elements of $M$, we will prove certain monotonicity properties of the
matrix and then apply a matrix searching
technique~\cite{ref:EppsteinFa97,ref:FredericksonTh82,ref:FredericksonGe84}, so that it suffices to compute $O(n)$ elements of $M$.
One of the challenges is that it is not trivial to compute even a single element
of $M$. In the following, we first present an algorithm 
that can compute a
single matrix element $M[i,j]$, i.e., $r^*_{ij}$, in $O(n\log^2 n)$ time. By using the algorithm, we describe later 
how to find $r^*$ in $M$ in $O(n^2\log^2 n)$ time.

\subsubsection{An Algorithm for Computing $r^*_{ij}$}
\label{sec:comrij}

To compute $r^*_{ij}$ for $RB2C(i,j)$, we will resort to parametric search again. To this end, we first solve the decision problem: Given a value $r$, decide whether $r\geq r^*_{ij}$. We present an $O(n\log n)$ time decision algorithm for it.
Let $S_1$ be the subset of pairs of $S$ whose points are both in $L_{ij}$, and $S_2$ the subset of pairs of $S$ whose points are both in $R_{ij}$. The following observation is self-evident.

\begin{observation}\label{obser:20}
$r\geq r_{ij}^*$ if and only if there exist a pair of congruent disks of radius $r$ such that one disk covers all points of $L_{ij}\cup \{o\}$ and at least one point from each pair of $S_2$, and the other disk covers all points of $R_{ij}\cup \{o\}$ and at least one point from each pair of $S_1$.
\end{observation}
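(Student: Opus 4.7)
The plan is to prove the equivalence by translating the bichromatic coverage requirement into a pair of one-sided coverage conditions. The key structural fact I will use is that, by construction, $L_{ij}$ and $R_{ij}$ partition $P(S)$, so every pair of $S$ lies either wholly in $L_{ij}$ (that is, in $S_1$), wholly in $R_{ij}$ (in $S_2$), or is split with one point on each side.

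For the ``only if'' direction, I would take an optimal feasible pair of radius $r^*_{ij}$ for $RB2C(i,j)$ and replace it by two concentric disks $D_1, D_2$ of radius $r \geq r^*_{ij}$; the enlargement preserves every containment. The requirements $L_{ij} \cup \{o\} \subseteq D_1$ and $R_{ij} \cup \{o\} \subseteq D_2$ are then immediate. The residual condition that $D_1$ contain at least one point from each pair of $S_2$ will follow because both points of such a pair already lie in $R_{ij} \subseteq D_2$, so the bichromatic condition forces one of them to lie in $D_1$ as well; the symmetric argument handles $S_1$ and $D_2$.

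For the ``if'' direction, given radius-$r$ disks $D_1, D_2$ satisfying the hypothesized coverage, I would build an explicit red/blue assignment certifying that the two disks bichromatically cover $S$, by a three-case analysis per pair. A pair in $S_1$ is entirely in $D_1$ and, by hypothesis, has at least one point in $D_2$, so assigning that point to the $D_2$-color and the other to the $D_1$-color works; the $S_2$ case is symmetric; split pairs assign themselves by the side each point lies on. Combined with $L_{ij} \cup \{o\} \subseteq D_1$ and $R_{ij} \cup \{o\} \subseteq D_2$, this shows $(D_1, D_2)$ is feasible for $RB2C(i,j)$, forcing $r \geq r^*_{ij}$. There is no real obstacle beyond bookkeeping; the only point of care is to distinguish the color assigned to a point from the geometric question of which disk covers it, and to notice that the hypothesized one-sided coverage for $S_1$ and $S_2$ is exactly the residue of bichromatic coverage once the block containments of $L_{ij}$ and $R_{ij}$ are accounted for.
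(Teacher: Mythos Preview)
Your proposal is correct. The paper does not actually supply a proof for this observation; it simply states that it is ``self-evident.'' Your argument is precisely the natural unpacking of the definitions that the authors had in mind: the three-way case split on pairs (in $S_1$, in $S_2$, or split across $L_{ij}$ and $R_{ij}$) is exactly what makes the equivalence immediate once one recalls that $L_{ij}$ and $R_{ij}$ partition $P(S)$.
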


Based on Observation~\ref{obser:20}, our algorithm works as follows. First, we compute the radius $r_1$ of the smallest enclosing disk of $L_{ij}\cup \{o\}$ and the radius $r_2$ of the smallest enclosing disk of $R_{ij}\cup \{o\}$. Note that Observation~\ref{obser:20} implies that $r_1\leq r^*_{ij}$ and $r_2\leq r^*_{ij}$. Hence, if $r<\max\{r_1,r_2\}$, then we have $r<r^*_{ij}$, and thus we can stop the algorithm. Otherwise, we proceed as follows.

Observe that there exists a disk of radius $r$ covering all points of $L_{ij}\cup \{o\}$ and at least one point from each pair of $S_2$ if and only if $\calI_r(L_{ij}\cup \{o\})\cap \calU_r(S_2)\neq \emptyset$.
Computing $\calI_r(L_{ij}\cup \{o\})$ can be done in $O(n\log n)$ time. For $\calU_r(S_2)$, notice that every point of $S_2$ is in the disk $D_r(c)$ because $r_2\leq r$, where $c$ is the center of the smallest enclosing disk of $R_{ij}\cup \{o\}$. Hence, by Lemma~\ref{lem:arkin}, $\calU_r(S_2)$ can be computed in $O(n\log n)$ time. In addition, determining whether $\calI_r(L_{ij}\cup \{o\})\cap \calU_r(S_2)= \emptyset$ can also be done in $O(n\log n)$ time, as remarked in Section~\ref{sec:pre}. As such, determining whether there exists a disk of radius $r$ covering all points of $L_{ij}\cup \{o\}$ and at least one point from each pair of $S_2$ can be done in $O(n\log n)$ time.

Similarly, it takes $O(n\log n)$ time to determine whether there exists a disk of radius $r$ covering all points of $R_{ij}\cup \{o\}$ and at least one point from each pair of $S_1$.
This solves the decision problem in $O(n\log n)$ time.

With the above decision algorithm in hand, we can apply parametric search to compute $r_{ij}^*$.
We first prove the following lemma, which is critical to our parametric search algorithm. 

\begin{lemma}\label{lem:30}
Suppose $D_1$ and $D_2$ are a pair of congruent disks in an optimal solution for the problem $RB2C(i,j)$ with $D_1$ containing all points of $L_{ij}\cup \{o\}$ and $D_2$ containing all points of $R_{ij}\cup \{o\}$. Then, either $L_{ij}$ has a point on $\partial D_1$ or $R_{ij}$ has a point on $\partial D_2$.
\end{lemma}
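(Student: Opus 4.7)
The plan is a proof by contradiction: suppose that no point of $L_{ij}$ lies on $\partial D_1$ and no point of $R_{ij}$ lies on $\partial D_2$, and derive a pair of congruent disks of strictly smaller common radius still feasible for $RB2C(i,j)$, contradicting the optimality of $r_{ij}^*$.

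Under this hypothesis, $L_{ij}$ lies strictly in the interior of $D_1$ and $R_{ij}$ strictly in the interior of $D_2$, so these ``fixed-side'' containment constraints are locally slack: any sufficiently small perturbation of $(c_1,c_2,r)$ near the current optimum automatically keeps $L_{ij}\subseteq D_1$ and $R_{ij}\subseteq D_2$.

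I would next observe that the feasibility constraints of $RB2C(i,j)$ decouple between $D_1$ and $D_2$. Writing $S_1$ (resp.\ $S_2$) for the set of pairs of $S$ with both endpoints in $L_{ij}$ (resp.\ $R_{ij}$), pairs split across $L_{ij}$ and $R_{ij}$ are automatically bichromatic, so the remaining non-trivial constraints are that $D_1$ contain $o$ and at least one point from each pair in $S_2$, and that $D_2$ contain $o$ and at least one point from each pair in $S_1$. Writing $r_k^*$ for the minimum radius of the $k$-th single-disk subproblem, we therefore have $r_{ij}^*=\max(r_1^*,r_2^*)$. Without loss of generality $r_{ij}^*=r_1^*\ge r_2^*$, so $D_1$ attains the minimum of its subproblem: it is the smallest enclosing disk of $L_{ij}\cup\{o\}\cup T_1$ for some valid swing set $T_1\subseteq R_{ij}$. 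By the strict-interior hypothesis on $L_{ij}$, $D_1$ is equivalently the smallest enclosing disk of $\{o\}\cup T_1$, and $\partial D_1$ contains at least two points of $\{o\}\cup T_1\subseteq\{o\}\cup R_{ij}$ whose convex hull covers $c_1$.

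The central step of the plan, and the one I expect to be the main obstacle, is to exhibit a combinatorial \emph{swap} $T_1\to T_1'$ that strictly decreases the enclosing radius while still covering $L_{ij}$. More precisely, for some pair $(p,p')\in S_2$ with $p\in T_1$, the swing set $T_1'=(T_1\setminus\{p\})\cup\{p'\}$ should yield a smallest enclosing disk of $\{o\}\cup T_1'$ of strictly smaller radius than $D_1$ and still covering $L_{ij}$, which would contradict the optimality of $r_1^*$. Establishing this requires exploiting the angular separation of $L_{ij}$ from $R_{ij}$ around $o$ enforced by the definition of $RB2C(i,j)$ (the two rays between $p_i,p_{i+1}$ and between $q_j,q_{j+1}$ partition $P(S)$ into two complementary angular sectors containing $L_{ij}$ and $R_{ij}$), combined with a careful case analysis on the combinatorial structure of $\partial D_1$ and on the positions of the $R_{ij}$-swing partners relative to $o$ and $L_{ij}$. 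A fully symmetric argument handles the case $r_{ij}^*=r_2^*>r_1^*$ by exchanging the roles of $(D_1,L_{ij},S_2)$ and $(D_2,R_{ij},S_1)$.
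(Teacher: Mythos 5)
Your setup is sound up to the point where you conclude that, under the two-sided contradiction hypothesis and with $r_{ij}^*=r_1^*\ge r_2^*$, the disk $D_1$ is the smallest enclosing disk of $\{o\}\cup T_1$ for some $T_1\subseteq R_{ij}$, with at least two points of $\{o\}\cup T_1$ on $\partial D_1$. But the step you then rest the entire proof on --- a swap $T_1\to T_1'$ replacing some $p\in T_1$ by its partner $p'$ so as to strictly decrease the enclosing radius --- is exactly the part you do not prove, and it is not merely a technical obstacle: there is no reason such a swap must exist. If, say, three swing points from three distinct pairs of $S_2$ lie on $\partial D_1$ and any two of them already pin the disk, deleting one of them does not shrink the smallest enclosing disk, and inserting its partner certainly cannot help. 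More fundamentally, the contradiction in this lemma does not come from exhibiting a strictly smaller feasible pair at all; the hypothesis ``no point of $L_{ij}$ on $\partial D_1$ and no point of $R_{ij}$ on $\partial D_2$'' is refuted by a rigidity argument, not by an improvement argument.

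The missing (and much shorter) closing move is the one the paper uses. The at least two boundary points of $D_1$ all lie in $\{o\}\cup T_1\subseteq \{o\}\cup R_{ij}$, hence all lie in $D_2$, which by assumption contains $R_{ij}\cup\{o\}$. The smallest enclosing disk of these boundary points is $D_1$ itself, of radius $r_{ij}^*$; since $D_2$ is a disk of the same radius containing them, uniqueness of the smallest enclosing disk forces $D_2=D_1$, so these points lie on $\partial D_2$. At least one of them belongs to $R_{ij}$ (at most one can be $o$), contradicting the assumption that $R_{ij}$ has no point on $\partial D_2$. The paper reaches this conclusion without your WLOG by splitting on whether the points inside $D_1$ already force radius $r_{ij}^*$ (its second case, which is the argument just sketched) or not (its first case, where the analogous smallest-enclosing-disk comparison is run on $D_2$'s side); your two-sided contradiction plus WLOG would let you collapse this to a single case, but only after the swap is replaced by the rigidity argument.
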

\begin{proof}
If $L_{ij}$ has a point on $\partial D_1$, then the lemma obviously follows. In the following, we assume that $L_{ij}$ does not have a point on $\partial D_1$.

Let $P_1$ be the subset of points of $P(S)$ contained in $D_1$, and let $r_1$ be the radius of the smallest enclosing disk of $P_1$. Let $P_2$ be the subset of points of $P(S)$ contained in $D_2$, and let $r_2$ be the radius of the smallest enclosing disk of $P_2$. Note that $r_{ij}^*$ is the radius of $D_1$ and $D_2$. Hence, $r_1\leq r_{ij}^*$ and $r_2\leq r_{ij}^*$.

Depending on whether $r_1<r_{ij}^*$, there are two cases.

\begin{enumerate}
  \item If $r_1<r_{ij}^*$, then it must hold that $r_2=r_{ij}^*$ since otherwise we could find a feasible pair of congruent disks for $RB2C(i,j)$ with radius smaller than $r_{ij}^*$, contradicting with the definition of $r^*_{ij}$. Hence, $r_2>r_1$.

We claim that $R_{ij}$ must have a point on $\partial D_2$. Assume to the contrary that this is not true.
Then, all points of $A_2$ must be in $L_{ij}\cup\{o\}$, where $A_2$ is the subset of points of $P_2$ on the boundary of $D_2$. Since $r_2=r_{ij}^*$ and $r_2$ is the radius of the smallest enclosing disk of $P_2$, $D_2$ is the smallest enclosing disk of $P_2$. By the definition of $A_2$, $D_2$ is the smallest enclosing disk of $A_2$. Since $A_2\subseteq L_{ij}\cup\{o\}\subseteq P_1$ and $r_1$ is the radius of the smallest enclosing disk of $P_1$, we obtain that $r_1\geq r_2$, which contradicts with the fact that $r_2>r_1$.

This proves that $R_{ij}$ must have a point on $\partial D_2$.

\item
If $r_1=r_{ij}^*$, then since $r_1$ is the radius of the smallest
enclosing disk of $P_1$, $D_1$ is the smallest enclosing disk of
$P_1$. Let $A_1$ be the subset of points of
$P_1$ on the boundary of $D_1$. Note that $|A_1|\geq 2$.
Since $L_{ij}$ does not have a point
on $\partial D_1$, $A_1\subseteq R_{ij}\cup\{o\}$. Thus, $A_1\subseteq
P_2$, for $R_{ij}\cup\{o\}\subseteq P_2$.

Note that $D_1$ is the smallest enclosing disk of $A_1$, with radius
$r_1=r_{ij}^*$ and all points of $A_1$ on the boundary of $D_1$.
Because the radius of $D_2$ is $r_{ij}^*$, $D_2$
covers all points of $P_2$, and $A_1\subseteq P_2$, the points of
$A_1$ must be all on the boundary of $D_2$. As $A_1\geq 2$ and
$A_1\subseteq R_{ij}\cup\{o\}$, we obtain that at least one point of $R_{ij}$ must be
on the boundary of $D_2$.
\end{enumerate}

Therefore, in either case above, $R_{ij}$ must have a point on $\partial D_2$.
The lemma thus follows.
\end{proof}

In light of Lemma~\ref{lem:30}, without loss of generality, we assume that $L_{ij}$ has a point on $\partial D_1$ (because we do not know which case actually happens, we will run the following algorithm twice: once for $L_{ij}$ and once for $R_{ij}$, and then return the best solution). Based on this property, we show below that we can compute $r_{ij}^*$ using the same parametric search scheme as in the algorithm for Lemma~\ref{lem:distantopt}.
To this end, 
we need to make sure a corresponding decision algorithm, i.e., the one in the algorithm for Lemma~\ref{lem:distantopt}, still works for our problem $RB2C(i,j)$. In the following, we present such a decision algorithm, i.e., for determining whether $r\geq r^*_{ij}$ for any given value $r$.

Let $P_1=L_{ij}\cup\{o\}$.
We compute $\calI_r(P_1)$. Then, for each point $c\in R_{ij}$, we compute the at most two intersection points of $\partial \calI_r(P_1)\cap \partial D_r(c)$. We sort these points and the vertices of $\calI_r(P_1)$ into a list $I$.

We scan the list $I$. For each point $c\in I$, we process it as follows.
We place a disk $D_r(c)$ of radius $r$ centered at $c$. Note that $D_r(c)$ covers all points of $P_1$.  We wish to determine whether there exists a feasible pair of disks for $RB2C(i,j)$ such that one disk is $D_r(c)$. To this end, we first check whether $D_r(c)$ contains at least one point from each pair of $S$ (note that $D_r(c)$ already contains $o$). If no, the processing of the point $c$ is done. Otherwise, we proceed as follows. Let $S(c)$ be the subset of pairs of $S$ whose points are both covered by $D_r(c)$. Let $P(c)$ be the subset of points of $P(S)$ not covered by $D_r(c)$, and we also add the point $o$ to $P(c)$.
It suffices to determine whether there exists a disk of radius $r$ containing all points of $P(c)$ and at least one point from each pair of $S(c)$. For this, we can compute $\calI_r(P(c))$ and $\calU_r(S(c))$, and then determine whether their intersection is empty.

The above algorithm solves the decision problem, although it takes more time than our previous $O(n\log n)$ time solution. The algorithm implies that we can use the same parametric search scheme as the one for Lemma~\ref{lem:distantopt}.
Specifically, we first compute the farthest Voronoi diagram $\FVD(P_1)$ of $P_1$, and then use it to determine the combinatorial structure of $\calI_{r_{ij}^*}(P_1)$. Next, we compute the sorted list $I$ with respect to $r_{ij}^*$ by using Cole's parametric search.
As the algorithm for Lemma~\ref{lem:distantopt}, the parametric search undergoes a discrete change at $r=r_{ij}^*$ (again, see Lemma~4.2 in \cite{ref:EppsteinFa97} for a similar proof), meaning that $r=r_{ij}^*$ will be tested by the decision algorithm. Therefore,
among all values $r$ larger than or equal to $r_{ij}^*$ tested by the decision algorithm in the above parametric search, the smallest one is $r_{ij}^*$. The overall algorithm takes $O(n\log n)$ time, in addition to $O(\log n)$ calls on the decision algorithm. If we use our previous $O(n\log n)$ time decision algorithm, the total time for computing $r_{ij}^*$ is $O(n\log^2 n)$.
\begin{lemma}\label{lem:40}
For any pair $(i,j)$ with $0\leq i, j\leq n$, the value $r_{ij}^*$ can be computed in $O(n\log^2 n)$ time.
\end{lemma}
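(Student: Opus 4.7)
The plan is to mimic the parametric-search framework of Lemma~\ref{lem:distantopt}, but with two modifications: the underlying decision algorithm must check the stronger feasibility condition imposed by $RB2C(i,j)$, and the geometric ``anchor'' for the sweep is supplied by Lemma~\ref{lem:30} rather than by the set $P_1$ used in the distant case. By Lemma~\ref{lem:30}, in an optimal solution for $RB2C(i,j)$ either $L_{ij}$ has a point on $\partial D_1$ or $R_{ij}$ has a point on $\partial D_2$. Since we do not know which case occurs, we simply run the algorithm twice, swapping the roles of $L_{ij}$ and $R_{ij}$, and take the better answer. So I may assume $L_{ij}$ has a point on $\partial D_1$, and set $P_1 = L_{ij}\cup\{o\}$, so that in an optimal solution $D_1$ contains $P_1$ with at least one point of $P_1$ on its boundary.

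With this anchor set $P_1$, I reuse the structural observation already exploited in Section~\ref{sec:dist}: if we had $r = r^*_{ij}$, then the center of $D_1$ would lie on $\partial \calI_r(P_1)$, and it would lie on an arc determined either by a vertex of $\calI_r(P_1)$ or by an intersection point of $\partial\calI_r(P_1)$ with some $\partial D_r(c)$ for $c\in R_{ij}$. This suggests a scanning-style decision procedure: compute $\calI_r(P_1)$, compute the sorted list $I$ of vertices of $\calI_r(P_1)$ together with the at most $2|R_{ij}|$ intersection points of $\partial D_r(c)\cap \partial\calI_r(P_1)$ for $c\in R_{ij}$, and for each $c\in I$ test whether $D_r(c)$ together with some radius-$r$ companion disk forms a feasible pair for $RB2C(i,j)$; this companion test is exactly the one carried out in the text just before the lemma using $\calI_r(P(c))\cap \calU_r(S(c))$. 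This gives a correct, if slower, decision algorithm whose combinatorial skeleton (the sorted list $I$) matches that of the distant case verbatim.

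For the optimization, I apply Cole's parametric search to this scanning-style decision algorithm, using the faster $O(n\log n)$ decision algorithm (based on Observation~\ref{obser:20} and $\calI_r(L_{ij}\cup\{o\})\cap\calU_r(S_2)$, $\calI_r(R_{ij}\cup\{o\})\cap\calU_r(S_1)$) as the oracle that resolves comparisons. The preparatory steps are exactly those of Lemma~\ref{lem:distantopt}: construct $\FVD(P_1)$ and binary-search over the critical radii $r_e$ of its edges with $O(\log n)$ oracle calls to pin down an interval on which the combinatorial structure of $\calI_r(P_1)$ is fixed; then use $O(n)$ parallel binary searches (speeded up by Cole's scheme) to produce the sorted list $I$ with respect to $r^*_{ij}$, costing $O(n\log n)$ time plus $O(\log n)$ further oracle calls. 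As in Lemma~4.2 of~\cite{ref:EppsteinFa97}, the parametric search undergoes a discrete change at $r = r^*_{ij}$, so $r^*_{ij}$ is tested by the oracle and is the smallest feasible value tested.

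The main obstacle, and the reason we cannot simply invoke Lemma~\ref{lem:distantopt} as a black box, is verifying that Lemma~\ref{lem:30} (rather than Observation~\ref{obser:10}) gives us the right anchor: in the distant case the anchor $P_1$ was a half-plane subset of $P(S)$ forced by a separating line, whereas here the anchor is $L_{ij}\cup\{o\}$ and the feasibility test against the remaining pairs has to include the coverage constraint on $R_{ij}\cup\{o\}$. Once the case split from Lemma~\ref{lem:30} is in place, the analysis of the parametric search is essentially identical. Totaling the costs, we get $O(n\log n)$ for the generic scanning algorithm plus $O(\log n)$ invocations of an $O(n\log n)$ decision algorithm, i.e., $O(n\log^2 n)$ time for computing $r^*_{ij}$.
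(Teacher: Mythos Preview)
Your proposal is correct and follows essentially the same approach as the paper: use Lemma~\ref{lem:30} to fix the anchor set $P_1=L_{ij}\cup\{o\}$ (running twice for the two cases), set up the scanning-style decision procedure on the sorted list $I$ along $\partial\calI_r(P_1)$ to mirror the distant case, and then apply the parametric search scheme of Lemma~\ref{lem:distantopt} (FVD of $P_1$, binary search on the $r_e$ values, Cole's speedup) with the $O(n\log n)$ decision algorithm from Observation~\ref{obser:20} as the oracle. The paper's argument is organized identically, including the appeal to Lemma~4.2 of~\cite{ref:EppsteinFa97} for the discrete change at $r=r^*_{ij}$ and the final $O(n\log^2 n)$ tally.
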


After $r_{ij}^*$ is computed, one can apply our decision algorithm with $r=r^*_{ij}$ to find a pair of optimal congruent disks of radius equal to $r^*_{ij}$, in additional $O(n\log n)$ time.

\subsubsection{Searching $r^*$ in the Matrix $M$}
\label{sec:compr}

We now find $r^*$ in $M$ by using the algorithm in the previous subsection. The runtime of our algorithm is $O(n^2\log^2 n)$.

To find $r^*$ in $M$, a straightforward way is to compute all $(n+1)^2$ elements of $M$ and then return the minimum one, which would take $O(n^3\log^2 n)$ time by Lemma~\ref{lem:40}. To reduce the time, we resort to some matrix searching techniques~\cite{ref:EppsteinFa97,ref:FredericksonTh82,ref:FredericksonGe84}. To this end, we need some sort of ``stronger'' solution for the problem $RB2C(i,j)$, as follows.

Consider any pair $(i,j)$ with $0\leq i, j\leq n$.
Define $S_1$ to be the subset of pairs of $S$ whose points are both in $L_{ij}$. 
Similarly, define $S_2$ to be the subset of pairs of $S$ whose points are both in $R_{ij}$. Define $D^1_{ij}$ to be the smallest disk containing all points of $L_{ij}\cup\{o\}$ and at least one point of each pair of $S_2$, and let $l_{ij}$ be the radius of $D^1_{ij}$. Similarly, define $D^2_{ij}$ to be the smallest disk containing all points of $R_{ij}\cup\{o\}$ and at least one point of each pair of $S_1$, and let $r_{ij}$ be the radius of $D^2_{ij}$.
We have the following observation.

\begin{observation}\label{obser:30}
\begin{enumerate}
\item
$\max\{l_{ij},r_{ij}\}= r_{ij}^*$.
\item
If $l_{ij}<r^*_{ij}$, then $r_{ij}=r^*_{ij}$; otherwise, $l_{ij}=r^*_{ij}$.
\end{enumerate}
\end{observation}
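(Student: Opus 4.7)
The plan is to establish Part 1 ($\max\{l_{ij},r_{ij}\}=r_{ij}^*$) by proving the two opposing inequalities, and then derive Part 2 as an immediate bookkeeping consequence of Part 1 combined with the bounds produced along the way.

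For the easy direction $\max\{l_{ij},r_{ij}\}\le r_{ij}^*$, I would take an optimal pair $(D_1,D_2)$ for $RB2C(i,j)$ with $D_1\supseteq L_{ij}\cup\{o\}$ and $D_2\supseteq R_{ij}\cup\{o\}$, both of radius $r_{ij}^*$, and argue that $D_1$ is already a feasible candidate disk for the minimization problem defining $D^1_{ij}$. The point is that any pair in $S_2$ has both endpoints inside $R_{ij}\subseteq D_2$, so the bichromatic covering condition forces $D_1$ to contain at least one point of that pair; combined with $D_1\supseteq L_{ij}\cup\{o\}$, this exhibits $D_1$ as feasible and gives $l_{ij}\le r_{ij}^*$. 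The symmetric argument gives $r_{ij}\le r_{ij}^*$.

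For the reverse inequality, let $r=\max\{l_{ij},r_{ij}\}$ and inflate $D^1_{ij}$ and $D^2_{ij}$ about their own centers to two congruent disks $D_1',D_2'$ of radius $r$. I would then hand-build a bichromatic coloring of $S$ that is covered by $(D_1',D_2')$: for a pair in $S_1$, both points lie in $D_1'$, so I send to $D_2'$ whichever endpoint $D^2_{ij}$ already caught and leave the other in $D_1'$; pairs in $S_2$ are handled symmetrically using the $D^1_{ij}$ witness; a cross pair has its $L_{ij}$-point in $D_1'$ and its $R_{ij}$-point in $D_2'$, so the coloring is forced. This shows $r_{ij}^*\le r$, completing Part 1. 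Part 2 then follows at once: by Part 1 we always have $l_{ij},r_{ij}\le r_{ij}^*$ with $\max\{l_{ij},r_{ij}\}=r_{ij}^*$, so $l_{ij}<r_{ij}^*$ forces $r_{ij}$ to attain the max, and the complementary case forces $l_{ij}=r_{ij}^*$.

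The only genuinely non-routine step I expect is verifying that the coloring in the second inequality is globally consistent. The subtle point is that $D^1_{ij}$ and $D^2_{ij}$ each choose their own witness point independently from each pair they are responsible for, so one must check these choices never collide. They do not, because $S_1$ and $S_2$ are disjoint subsets of $S$ (a pair in $S_1$ contributes a witness only to the definition of $D^2_{ij}$, and vice versa), while cross pairs are settled by location alone without invoking either witness.
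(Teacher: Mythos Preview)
Your proposal is correct and follows essentially the same approach as the paper. The paper's proof is terser: it simply asserts that $D^1_{ij}$ and $D^2_{ij}$ form a feasible pair for $RB2C(i,j)$ (giving $\max\{l_{ij},r_{ij}\}\ge r_{ij}^*$) and that an optimal pair for $RB2C(i,j)$ witnesses $l_{ij}\le r_{ij}^*$ and $r_{ij}\le r_{ij}^*$ directly from the definitions; you spell out the inflation to congruent disks and the explicit coloring that the paper leaves implicit, which is fine and arguably clearer.
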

\begin{proof}
Notice that $D^1_{ij}$ and $D^2_{ij}$ form a feasible pair of disks for the problem $RB2C(i,j)$. Therefore, $\max\{l_{ij},r_{ij}\}\geq r_{ij}^*$ holds.

Next, we show that $\max\{l_{ij},r_{ij}\}\leq r_{ij}^*$.
Consider an optimal solution for the problem $RB2C(i,j)$, in which one disk $D_1$ must contain all points of $L_{ij}\cup\{o\}$ and at least one point of each pair of $S_2$ and the other disk $D_2$ must contain all points of $R_{ij}\cup\{o\}$ and at least one point of each pair of $S_1$, and $D_1$ and $D_2$ are congruent with radius $r_{ij}^*$. By the definitions of $l_{ij}$ and $r_{ij}$, $l_{ij}\leq r_{ij}^*$ and $r_{ij}\leq r_{ij}^*$. Therefore, $\max\{l_{ij},r_{ij}\}\leq r_{ij}^*$.

The above proves that $\max\{l_{ij},r_{ij}\}= r_{ij}^*$, from which the second part of the observation easily follows.
\end{proof}

Our algorithm for searching $r^*$ in $M$ needs to solve the following subproblem: decide whether $l_{ij}<r^*_{ij}$. With help of Lemma~\ref{lem:40}, we have the following result.

\begin{lemma}\label{lem:50}
For any $(i,j)$ with $0\leq i, j\leq n$, deciding whether $l_{ij}<r^*_{ij}$ can be done in $O(n\log^2 n)$ time.
\end{lemma}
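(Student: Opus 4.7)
The plan is to compute $r^*_{ij}$ outright and then detect whether the left-side feasibility region has any slack at that radius. First, I invoke Lemma~\ref{lem:40} to compute $r^*_{ij}$ in $O(n\log^2 n)$ time, and follow it with one extra call to the decision algorithm of Section~\ref{sec:comrij} at $r=r^*_{ij}$ to recover an actual feasible pair of congruent disks $D_1,D_2$ of radius $r^*_{ij}$ realizing the optimum of $RB2C(i,j)$; without loss of generality $D_2$ covers $R_{ij}\cup\{o\}$. Let $c_2$ be the center of $D_2$. Since $D_{r^*_{ij}}(c_2)\supseteq R_{ij}\cup\{o\}\supseteq P(S_2)$, Lemma~\ref{lem:arkin} applies to $S_2$ at radius $r^*_{ij}$ with reference center $c_2$.

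The key reduction is the equivalence: $l_{ij}<r^*_{ij}$ if and only if the interior of $\calI_{r^*_{ij}}(L_{ij}\cup\{o\})\cap \calU_{r^*_{ij}}(S_2)$ is non-empty. For the forward direction, if $l_{ij}<r^*_{ij}$ then at $r=l_{ij}$ there is a center $c$ with $D_{l_{ij}}(c)\supseteq L_{ij}\cup\{o\}$ and containing one point from each pair of $S_2$; at the strictly larger radius $r^*_{ij}$, every such membership becomes strict, so $c$ lies in the interior of both $\calI_{r^*_{ij}}(L_{ij}\cup\{o\})$ and $\calU_{r^*_{ij}}(S_2)$. Conversely, a non-empty interior intersection at $r=r^*_{ij}$ persists after a small decrease of $r$ (both regions shrink continuously), which produces a radius strictly below $r^*_{ij}$ witnessing feasibility of the defining condition for $l_{ij}$, giving $l_{ij}<r^*_{ij}$.

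The remaining work is a single interior-intersection test at $r=r^*_{ij}$. I would compute $\calI_{r^*_{ij}}(L_{ij}\cup\{o\})$ in $O(n\log n)$ time as in Section~\ref{sec:pre}, compute $\calU_{r^*_{ij}}(S_2)$ in $O(n\log n)$ time by Lemma~\ref{lem:arkin} using $c_2$ as the required cover center, and then apply the variant of the intersection test remarked at the end of Section~\ref{sec:pre}, which checks whether the interior of $\calU_r(S')$ meets the interior of $\calI_r(A)$ via an angular sweep around the star-center in $O(n\log n)$ time. Adding this $O(n\log n)$ test to the $O(n\log^2 n)$ cost of computing $r^*_{ij}$ gives the bound claimed in Lemma~\ref{lem:50}.

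The main obstacle is the interior-intersection equivalence in the second paragraph: one must argue carefully that the interior vanishes precisely when $l_{ij}=r^*_{ij}$, using monotonicity of $\calI_r$ and $\calU_r$ in $r$ together with continuity; once this characterization is justified, everything else reuses machinery already developed in this section, and in particular no new decision algorithm is needed.
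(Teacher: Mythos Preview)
Your proposal is correct and follows essentially the same approach as the paper: compute $r_{ij}^*$ via Lemma~\ref{lem:40}, then test whether the interiors of $\calI_{r_{ij}^*}(L_{ij}\cup\{o\})$ and $\calU_{r_{ij}^*}(S_2)$ intersect using the angular-sweep remark from Section~\ref{sec:pre}. The only cosmetic difference is the choice of the reference center needed to invoke Lemma~\ref{lem:arkin} for $\calU_{r_{ij}^*}(S_2)$: the paper takes the center of the smallest enclosing disk of $R_{ij}$ (computable directly in $O(n)$ time), whereas you extract $c_2$ from a feasible pair returned by one extra $O(n\log n)$ decision call---both choices are valid and neither affects the asymptotic bound.
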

\begin{proof}
We first compute $r_{ij}^*$ by Lemma~\ref{lem:40}.
Let $r=r_{ij}^*$, and $P_1=L_{ij}\cup \{o\}$.
We compute $\calI_r(P_1)$ and $\calU_r(S_2)$. $\calI_r(P_1)$ can be computed in $O(n\log n)$ time,  as discussed in Section~\ref{sec:pre}. For $\calU_r(S_2)$, notice that
all points of $S_2$ are covered by the disk $D_r(c)$, where $c$ is the center of the smallest enclosing disk of $R_{ij}$, since $r=r_{ij}^*$ is no smaller than the radius of the smallest enclosing disk of $R_{ij}$.
Hence, once $c$ is computed in $O(n)$ time~\cite{ref:ChazelleOn96,ref:DyerOn86,ref:MegiddoLi83}, $\calU_r(S_2)$ can be computed in $O(n\log n)$ time by Lemma~\ref{lem:arkin}.
Then, observe that $l_{ij}<r$ if and only if the intersection of the interior of $\calI_r(P_1)$ and the interior of $\calU_r(S_2)$ is not empty. Checking whether the interior of $\calI_r(P_1)$ intersects the interior of $\calU_r(S_2)$ can be done in additional $O(n\alpha(n))$ time as remarked in Section~\ref{sec:pre}. Hence, we can determine whether $l_{ij}<r^*_{ij}$ in $O(n\log^2 n)$ time, which is dominated by the algorithm for computing $r_{ij}^*$.
\end{proof}

The following lemma provides a basis for applying a matrix searching technique~\cite{ref:EppsteinFa97,ref:FredericksonTh82,ref:FredericksonGe84} to search $r^*$ in the matrix $M$.

\begin{lemma}\label{lem:60}
For any $0\leq i,j\leq n$, if $l_{ij}<r^*_{ij}$, then $r^*_{ij}\leq r^*_{i'j'}$ for any $i'\in [i,n]$ and $j'\in [0,j]$; otherwise, $r^*_{ij}\leq r^*_{i'j'}$ for any $i'\in [0,i]$ and $j'\in [j,n]$.
\end{lemma}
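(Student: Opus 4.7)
The plan is to exploit the monotonicity of the sets $L_{ij}$ and $R_{ij}$ under shifting $(i,j) \to (i',j')$, combined with the characterization of $r^*_{ij}$ given by Observation~\ref{obser:30}. The two halves of the lemma are symmetric, so I will focus on the first and only remark on the second.

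First I would record the elementary set-inclusion fact: if $i' \in [i,n]$ and $j' \in [0,j]$, then $L_{i'j'} \subseteq L_{ij}$ and $R_{i'j'} \supseteq R_{ij}$ (directly from the definitions $L_{i'j'} = \{p_{i'+1},\dots,p_n,q_1,\dots,q_{j'}\}$ and $R_{i'j'} = \{q_{j'+1},\dots,q_n,p_1,\dots,p_{i'}\}$). Consequently, writing $S_1^{(i'j')}$ for the subset of $S$ whose pairs lie entirely in $L_{i'j'}$, we have $S_1^{(i'j')} \subseteq S_1$. Moreover, for every pair $(p,p') \in S_1 \setminus S_1^{(i'j')}$, at least one of $p, p'$ belongs to $L_{ij} \setminus L_{i'j'}$ and hence to $R_{i'j'}$ (since $P(S) = L_{i'j'} \cup R_{i'j'}$). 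This is the crucial containment observation that drives the proof.

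Now assume $l_{ij} < r^*_{ij}$, so that $r_{ij} = r^*_{ij}$ by Observation~\ref{obser:30}. I would argue $r_{i'j'} \geq r_{ij}$ as follows. Let $D$ be the disk of radius $r_{i'j'}$ witnessing the definition of $r_{i'j'}$, i.e., $D$ covers $R_{i'j'} \cup \{o\}$ and contains at least one point from every pair of $S_1^{(i'j')}$. Since $R_{i'j'} \supseteq R_{ij}$, $D$ covers $R_{ij} \cup \{o\}$. For each pair of $S_1^{(i'j')}$, $D$ already contains one endpoint by construction, and for each pair in $S_1 \setminus S_1^{(i'j')}$, one endpoint lies in $R_{i'j'}$ and is therefore covered by $D$. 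Hence $D$ is a feasible disk for the definition of $r_{ij}$, yielding $r_{i'j'} \geq r_{ij}$. Since $r^*_{i'j'} = \max\{l_{i'j'}, r_{i'j'}\} \geq r_{i'j'} \geq r_{ij} = r^*_{ij}$, the first part of the lemma follows.

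For the second case ($l_{ij} \geq r^*_{ij}$, so $l_{ij} = r^*_{ij}$), the situation is completely symmetric: for $i' \in [0,i]$ and $j' \in [j,n]$ we have $L_{i'j'} \supseteq L_{ij}$ and $R_{i'j'} \subseteq R_{ij}$, and the same kind of feasibility-transfer argument applied on the $L$-side gives $l_{i'j'} \geq l_{ij} = r^*_{ij}$, whence $r^*_{i'j'} \geq l_{i'j'} \geq r^*_{ij}$. The main obstacle I anticipate is not technical difficulty but making the feasibility-transfer step rigorous—specifically, cleanly handling the pairs that are in $S_1$ but not in $S_1^{(i'j')}$, by exploiting that one of their endpoints has migrated into the required coverage set $R_{i'j'}$. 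Everything else is bookkeeping from the definitions and Observation~\ref{obser:30}.
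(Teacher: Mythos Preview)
Your proposal is correct and follows essentially the same argument as the paper: both use the set inclusions $L_{i'j'}\subseteq L_{ij}$, $R_{ij}\subseteq R_{i'j'}$, invoke Observation~\ref{obser:30} to get $r_{ij}=r^*_{ij}$, and then show via the same feasibility-transfer (the witness disk for $r_{i'j'}$ already covers $R_{ij}\cup\{o\}$ and picks up one endpoint of every pair in $S_1\setminus S_1^{(i'j')}$ because such an endpoint has migrated into $R_{i'j'}$) that $r_{i'j'}\ge r_{ij}$, concluding $r^*_{i'j'}\ge r_{i'j'}\ge r_{ij}=r^*_{ij}$. The symmetric case is handled identically in both.
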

\begin{proof}
If $l_{ij}<r^*_{ij}$, then $r_{ij}=r_{ij}^*$ by Observation~\ref{obser:30}. Consider any pair $(i',j')$ with $i'\in [i,n]$ and $j'\in [0,j]$. By their definitions, $R_{ij}\subseteq R_{i'j'}$ and $L_{i'j'}\subseteq L_{ij}$.
Let $S_1'$ be the subset of pairs of $S$ whose points are both in $L'_{ij}$.
Then, $S_1'\subseteq S_1$, for  $L_{i'j'}\subseteq L_{ij}$.
We claim that $r_{ij}\leq r_{i'j'}$. Indeed, consider a disk $D$ of radius $r_{i'j'}$ containing all points of $R_{i'j'}$ and at least one point for each pair of $S'_1$. Observe that at least one point of each pair of $S_1\setminus S_1'$ is in $R_{i'j'}$. Because $R_{ij}\subseteq R_{i'j'}$, the disk $D$ contains all points of $R_{ij}$ and at least one point of $S_1$. Therefore, by the definition of $r_{ij}$, since $r_{i'j'}$ is the radius of $D$, $r_{ij}\leq r_{i'j'}$ holds.
Because $r_{ij}=r_{ij}^*$ and $r_{i'j'}\leq r^*_{i'j'}$ (by Observation~\ref{obser:30}), we obtain that $r^*_{ij}\leq r^*_{i'j'}$.

If $l_{ij}<r^*_{ij}$ does not hold, then $l_{ij}=r^*_{ij}$ by Observation~\ref{obser:30}. This is a symmetric case to the above and by a similar proof we can show that $r^*_{ij}\leq r^*_{i'j'}$ holds for any $i'\in [0,i]$ and $j'\in [j,n]$.
\end{proof}

Recall that $r^*$ is equal to the smallest element of $M$ and each matrix element $M[i,j]$ is equal to $r_{ij}^*$. Lemma~\ref{lem:60} essentially tells the following: If $l_{ij}<r^*_{ij}$ for a cell $M[i,j]$ of $M$, then all cells of $M$ to the southwest of $M[i,j]$ can be pruned (i.e., they are irrelevant to finding $r^*$); otherwise all cells of $M$ to the northeast of $M[i,j]$ can be pruned. This is exactly the property the matrix searching algorithm in~\cite{ref:EppsteinFa97} (i.e., the algorithm in Lemma~5.3, which relies on the property in Lemma~5.2 that is similar to ours and follows a similar technique as in~\cite{ref:FredericksonTh82,ref:FredericksonGe84}) relies on.
By using that algorithm, we can compute $r^*$ from $M$ with $O(n)$ matrix cell evaluations and $O(n)$ additional time, and here each matrix cell evaluation on $M[i,j]$ is to compute $r_{ij}^*$ and determine whether $l_{ij}<r^*_{ij}$. By Lemmas~\ref{lem:40} and \ref{lem:50}, each matrix cell evaluation can be done in $O(n\log^2 n)$ time, which leads to an $O(n^2\log^2 n)$ time algorithm for finding $r^*$ from the matrix $M$.

Once $r^*$ is known, we can obtain a pair of optimal disks as follows. Assume that $r^*$ is equal to $r^*_{ij}$ for some $i$ and $j$. We apply our decision algorithm with $r=r^*$ for the problem  $RB2C(i,j)$ to obtain two congruent disks of radius $r^*$ as the optimal solution for our original bichromatic 2-center problem on $S$.

\begin{theorem}
The bichromatic 2-center problem on a set of $n$ pairs of points in the plane is solvable in $O(n^2\log^2 n)$ time.
\end{theorem}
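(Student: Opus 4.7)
The plan is to combine the two case analyses developed in Sections~\ref{sec:dist} and \ref{sec:near} into a single algorithm that handles an arbitrary instance. Since the optimal solution $OPT$ satisfies either $\delta^* \geq r^*$ (distant case) or $\delta^* < r^*$ (nearby case), and we cannot determine a priori which case holds, the algorithm will simply run both subroutines and return the pair of congruent disks of smaller radius.

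For the distant case, I would invoke Lemma~\ref{lem:distantopt}, which already computes an optimal pair of disks under the assumption that $\delta^* \geq r^*$ in $O(n^2\log^2 n)$ time. Recall that this involves enumerating the $O(1)$ candidate separating lines $l \in L$ and the two sides of each, running the parametric search-based optimization for each, and taking the best result; the multiplicative $O(1)$ overhead from this enumeration is absorbed. For the nearby case, I would enumerate the $O(1)$ candidate points $o$ that must lie in $D_1^* \cap D_2^*$, as well as the $O(1)$ choices of separating coordinate axis, and for each configuration build the matrix $M$ implicitly. By Lemma~\ref{lem:60}, the matrix satisfies the monotonicity property required by the matrix searching technique of \cite{ref:EppsteinFa97,ref:FredericksonTh82,ref:FredericksonGe84}, so I would apply this technique to locate the minimum entry using only $O(n)$ cell evaluations plus $O(n)$ additional work. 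Each cell evaluation comprises computing $r_{ij}^*$ via Lemma~\ref{lem:40} and testing $l_{ij} < r_{ij}^*$ via Lemma~\ref{lem:50}, both in $O(n\log^2 n)$ time, so the nearby case finishes in $O(n^2\log^2 n)$ time overall.

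Correctness follows by combining the two cases: in the distant case, the correctness of Lemma~\ref{lem:distantopt} gives an optimal solution directly; in the nearby case, Lemma~\ref{lem:opt} guarantees that $r^* = \min_{i,j} r_{ij}^*$, which the matrix searching procedure finds. Since the algorithm returns the better of the two outputs, it returns an optimal solution regardless of which case actually holds. Once $r^*$ is identified, one more invocation of the appropriate decision algorithm recovers the two congruent disks realizing it.

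I do not expect a serious new obstacle here, since the heavy lifting has already been isolated into Lemmas~\ref{lem:distantopt}, \ref{lem:40}, \ref{lem:50}, and \ref{lem:60}; the theorem is essentially their bookkeeping combination. The subtle point to double-check is that the $O(1)$ combinatorial enumerations in both cases (lines in $L$, separating points $o$, coordinate axes, and the $L_{ij}$/$R_{ij}$ symmetry in Lemma~\ref{lem:30}) all contribute only constant multiplicative overhead, so that the final running time remains $O(n^2\log^2 n)$.
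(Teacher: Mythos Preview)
Your proposal is correct and matches the paper's approach essentially verbatim: the paper states the theorem immediately after developing the two cases, having already noted in the overview that one simply runs both the distant-case algorithm (Lemma~\ref{lem:distantopt}) and the nearby-case matrix search (via Lemmas~\ref{lem:40}, \ref{lem:50}, \ref{lem:60}) and returns the better solution. Your accounting of the $O(1)$ enumerations and the $O(n)$ cell evaluations at $O(n\log^2 n)$ each is exactly what the paper does.
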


\section{The Approximation Algorithm} \label{sec-approx}
In this section, we give a $(1+\varepsilon)$-approximate algorithm of  $O(n+(1/\varepsilon)^3\log^2(1/\varepsilon))$ time, improving the $O(n+(1/\varepsilon)^6\log^2(1/\varepsilon))$-time algorithm of \cite{ref:ArkinBi15}.
We assume that $\varepsilon$ is sufficiently small.

\subsection{Reducing to the IB2C problem}
The first step of our algorithm is to use a grid and identify the points in the same cell.
This is similar to an idea in \cite{ref:ArkinBi15} (and is in fact a standard technique used in many other geometric approximation algorithms), and here we describe it in a self-contained way.
Let $\tilde{r}$ be the radius of the minimum enclosing disk of $P(S)$.
Clearly, $\tilde{r} \geq r^*$.
If $\tilde{r} \geq 10r^*$, the problem is actually easy.
\begin{lemma}\label{lem:70}
    If $\tilde{r} \geq 10r^*$, then the bichromatic 2-center problem for $S$ can be solved exactly in $O(n)$ time.
\end{lemma}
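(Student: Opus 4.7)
The plan is to exploit the large gap between $\tilde{r}$ and $r^*$ to show that the two optimal disks $D_1^*$ and $D_2^*$ are disjoint and far apart, so the coloring of each pair is essentially forced and can be recovered by a single linear scan. First we establish that $\delta^* \geq 1.8\tilde{r}$: since $P(S) \subseteq D_1^* \cup D_2^*$, the monotonicity of the smallest enclosing disk under containment gives $\tilde{r} \leq \delta^*/2 + r^*$ (the two antipodal points of $D_1^* \cup D_2^*$ along the line through the two centers are at distance $\delta^* + 2r^*$, so any disk enclosing $D_1^* \cup D_2^*$ has radius at least $\delta^*/2 + r^*$). Combined with $r^* \leq \tilde{r}/10$ this yields $\delta^* \geq 2\tilde{r} - 2r^* \geq 1.8\tilde{r} > 2r^*$, so $D_1^*$ and $D_2^*$ are disjoint, and every pair of $S$ has exactly one endpoint in each.

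Next we locate in $O(n)$ time two reference points $a, b \in P(S)$ guaranteed to lie in different optimal disks. Using a linear-time algorithm (e.g.~\cite{ref:MegiddoLi83}) we compute the smallest enclosing disk of $P(S)$ and obtain its $2$ or $3$ boundary-defining points. If there are $2$ they are diametrical with $|ab| = 2\tilde{r}$; if there are $3$ they form an acute triangle inscribed in a circle of radius $\tilde{r}$, so one angle is at least $\pi/3$ and its opposite side has length at least $\sqrt{3}\,\tilde{r}$. In either case we pick $a, b$ with $|ab| \geq \sqrt{3}\,\tilde{r} > 2r^*$, which forces $a$ and $b$ into distinct optimal disks, say $a \in D_1^*$ and $b \in D_2^*$.

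The coloring step is then a single linear scan: place each $p \in P(S)$ into $U_1$ if $|pa| \leq \tilde{r}/2$ and into $U_2$ otherwise. For $p \in D_1^*$ we have $|pa| \leq 2r^* \leq \tilde{r}/5$, whereas for $p \in D_2^*$ we have $|pa| \geq \delta^* - 2r^* \geq 1.6\tilde{r}$, so $U_1 = P(S) \cap D_1^*$ and $U_2 = P(S) \cap D_2^*$ exactly. For every pair $(p, p') \in S$ we color its endpoint in $U_1$ red and the other blue, and finally recompute the smallest enclosing disks of the red and blue sets in $O(n)$ time each. The resulting coloring agrees with the optimal one (up to swapping the two color labels), so the maximum of the two enclosing radii equals $r^*$, and the entire procedure runs in $O(n)$ time. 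The only mildly subtle point is the acute-triangle characterization of the MEB that supplies the pair $a, b$ with $|ab| \geq \sqrt{3}\,\tilde{r}$; everything else is a direct computation and does not, to my eye, present a real obstacle.
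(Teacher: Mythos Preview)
Your argument is correct and takes a genuinely different route from the paper's.  The paper observes that $\delta^*\ge 8r^*$ and then invokes a result from the full version of~\cite{ref:EppsteinFa97} stating that, when the two optimal disks are this far apart, one can produce in $O(n)$ time a constant-size set of lines one of which separates $D_1^*$ from $D_2^*$; it then computes the two MEBs on each side of each candidate line and returns the best solution.  You instead exploit the gap directly: from two suitably chosen boundary points of the global MEB you extract a reference point $a$ guaranteed to lie in one optimal disk, and then a single distance threshold around $a$ recovers the exact optimal bipartition.  Your approach is more elementary and fully self-contained, avoiding the external citation; the paper's approach is terser but leans on a nontrivial black box.

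One small exposition issue: your parenthetical justifying $\tilde{r}\le \delta^*/2 + r^*$ actually argues the \emph{opposite} inequality (that any disk enclosing $D_1^*\cup D_2^*$ has radius at least $\delta^*/2+r^*$).  What you need is the easy direction, namely that the disk of radius $\delta^*/2+r^*$ centered at the midpoint of the two centers encloses $D_1^*\cup D_2^*\supseteq P(S)$; your ``monotonicity'' phrase already points to this, so just replace the parenthetical accordingly.  Also, the MEB triangle can be right rather than strictly acute, but in that borderline case the hypotenuse has length $2\tilde{r}\ge\sqrt{3}\,\tilde{r}$, so your bound still holds.
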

\begin{proof}
Consider the two disks $D^*_1$ and $D^*_2$ in an optimal solution $OPT$.
Since $r^* \leq \tilde{r}/10$, the distance between the centers of $D^*_1$ and $D^*_2$ is at least $8r^*$ (otherwise $P(S)$ would be contained in a disk of radius less than $10r^*$, contradicting the fact $\tilde{r} \geq 10r^*$).
In this case, it was shown in~\cite{ref:EppsteinFa97full} (the full version of \cite{ref:EppsteinFa97}) that we can find in $O(n)$ time a constant number of lines such that one line separates $D^*_1$ and $D^*_2$. Then, for each such line, we can compute the smallest enclosing disk of the points in each side of the line and return the larger radius as the solution for the line. Finally, the best solution of these lines is an optimal solution for our problem. The total time of the algorithm is $O(n)$. 
\end{proof}

So it suffices to consider the case where $\tilde{r} \in [r^*,10r^*)$.
We build a grid $G$ consisting of square cells of side-length $\delta=\varepsilon\tilde{r}/100$.
For a point $x \in \mathbb{R}^2$, we denote by $\Box_x$ the cell containing $x$.
For each pair $(a_i,a_i') \in S$, we create another point-pair $(b_i,b_i')$ where $b_i$ is an arbitrary vertex of $\Box_{a_i}$ and $b_i'$ is an arbitrary vertex of $\Box_{a_i'}$.
Let $S'$ be the set of all these pairs excluding the duplicates, and $\mathcal{D}$ be the collection of all disks whose centers are grid points of $G$.
Consider the bichromatic 2-center problem for $S'$ with the solution space $\mathcal{D}$ (i.e., the disks must be chosen from $\mathcal{D}$), and let $(D_r(c_1),D_r(c_2))$ be an optimal solution of this problem consisting of two congruent disks of radius $r$.
Set $r' = (1+\varepsilon/3)r$.
\begin{lemma}\label{lem:80}
	$(D_{r'}(c_1),D_{r'}(c_2))$ is a feasible solution for the bichromatic 2-center problem for $S$.
	Furthermore, $r^* \leq r' \leq (1+\varepsilon) r^*$.
\end{lemma}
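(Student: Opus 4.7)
The plan is to exploit that each point $a_i$ lies within Euclidean distance $\sqrt{2}\delta$ of its associated grid vertex $b_i$ (the diagonal of a cell of side $\delta$), so a bichromatic cover of $S'$ by two disks of radius $r$ inflates to a bichromatic cover of $S$ by two disks of radius $r+\sqrt{2}\delta$, and conversely $OPT$ ``snaps'' to a bichromatic cover of $S'$ inside $\calD$ of radius only slightly larger than $r^*$. The coloring witnessed by the optimal pair $(D_r(c_1),D_r(c_2))$ for $S'$ induces a coloring on $S$ via the correspondence $a_i\leftrightarrow b_i$, $a_i'\leftrightarrow b_i'$ (coherent across removed duplicates because any two original pairs mapping to the same element of $S'$ agree on their vertex labels up to swap). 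The lemma then splits into (i) checking that under this induced coloring $(D_{r'}(c_1),D_{r'}(c_2))$ is feasible for $S$, which immediately gives $r^*\le r'$, and (ii) showing $r'\le(1+\varepsilon)r^*$ by bounding $r\le r^*+O(\varepsilon\tilde r)$.

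For (i), the triangle inequality gives $|a_i-c_k|\le \sqrt{2}\delta+r$ whenever $b_i\in D_r(c_k)$, so the inflation to $r'=(1+\varepsilon/3)r$ covers $a_i$ provided $\sqrt{2}\delta\le \varepsilon r/3$. I would obtain a lower bound on $r$ in the other direction essentially for free: the same induced coloring makes $(D_{r+\sqrt{2}\delta}(c_1),D_{r+\sqrt{2}\delta}(c_2))$ feasible for $S$, so $r\ge r^*-\sqrt{2}\delta$. In the regime left after Lemma~\ref{lem:70} we have $\tilde r<10r^*$, hence $r^*>\tilde r/10$, and together with $\sqrt{2}\delta=\sqrt{2}\varepsilon\tilde r/100$ this yields $r\ge (10-\sqrt{2}\varepsilon)\tilde r/100$; the slack requirement $\sqrt{2}\delta\le \varepsilon r/3$ then reduces to $3\sqrt{2}\le 10-\sqrt{2}\varepsilon$, which holds comfortably for sufficiently small $\varepsilon$.

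For the upper bound in (ii), I would build an explicit competitor in $\calD$: let $c_1^*,c_2^*$ be the grid points of $G$ nearest to the centers $z_1,z_2$ of $D_1^*,D_2^*$, so $|c_k^*-z_k|\le \tfrac{\sqrt{2}}{2}\delta$. For a pair with $a_i\in D_1^*$, the vertex $b_i$ lies within $\sqrt{2}\delta$ of $a_i$ and hence within $r^*+\tfrac{3\sqrt{2}}{2}\delta$ of $c_1^*$; symmetrically for $b_i'$ and $c_2^*$. Thus $(D_{r^*+3\sqrt{2}\delta/2}(c_1^*),D_{r^*+3\sqrt{2}\delta/2}(c_2^*))\in\calD$ bichromatically covers $S'$, forcing $r\le r^*+\tfrac{3\sqrt{2}}{2}\delta$. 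Plugging in $\delta\le \varepsilon r^*/10$ (again from $\tilde r<10r^*$) gives $r'\le(1+\varepsilon/3)(1+3\sqrt{2}\varepsilon/20)\,r^*$; the linear-in-$\varepsilon$ coefficient is $\tfrac{1}{3}+\tfrac{3\sqrt{2}}{20}\approx 0.545<1$, so $r'\le(1+\varepsilon)r^*$ for small $\varepsilon$.

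The main obstacle is the bookkeeping of additive versus multiplicative errors: every step contributes an additive slack of order $\delta=\varepsilon\tilde r/100$, and all of these must compose into a single multiplicative $(1+\varepsilon)$ factor. Two things make this clean: Lemma~\ref{lem:70} has already stripped away the regime $\tilde r\ge 10r^*$, so $\tilde r=\Theta(r^*)$ and any $O(\varepsilon\tilde r)$ additive error becomes $O(\varepsilon r^*)$; and the small-$\varepsilon$ hypothesis lets the two $(1+O(\varepsilon))$ factors coming from $r'=(1+\varepsilon/3)r$ and $r\le(1+O(\varepsilon))r^*$ multiply to within the $(1+\varepsilon)$ budget with room to spare.
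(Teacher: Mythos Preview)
Your proposal is correct and follows essentially the same approach as the paper: bound the displacement between each $a_i$ and its grid vertex $b_i$, use this (together with $\tilde r<10r^*$ from Lemma~\ref{lem:70}) to sandwich $r$ additively around $r^*$, and then combine the two $(1+O(\varepsilon))$ factors. Your version is in fact slightly more careful than the paper's, since you explicitly snap the optimal centers to grid points when constructing the competitor in $\calD$ for the upper bound on $r$, a step the paper's proof elides.
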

\begin{proof}
We first notice that $2\delta = \varepsilon\tilde{r}/50 \leq \varepsilon r^*/5$.
If we compare each pair $(a_i,a_i') \in S$ with its corresponding pair $(b_i,b_i') \in S'$, we have $\lVert a_i - b_i \rVert_2 \leq 2\delta \leq \varepsilon r^*/5$ and $\lVert a_i' - b_i' \rVert_2 \leq 2\delta \leq \varepsilon r^*/5$.
Therefore, if we have a solution for the bichromatic 2-center problem for $S$, expanding both disks in the solution with an additive factor $\varepsilon r^*/5$ gives us a solution for the bichromatic 2-center problem for $S'$, and vice versa.
This implies $r^*-\varepsilon r^*/5 \leq r \leq r^*+\varepsilon r^*/5$.
Since $\varepsilon$ is sufficiently small, we have $3 r^*/5 \leq r$.
Then $r' = r+\varepsilon r/3 \geq r+ \varepsilon r^*/5$.
By the argument above, $(D_{r'}(c_1),D_{r'}(c_2))$ is a feasible solution for the bichromatic 2-center problem for $S$.
As such, $r' \geq r^*$.
On the other hand, we have $r' = (1+\varepsilon/3)r \leq (1+\varepsilon/3)(1+\varepsilon/5)r^* \leq (1+\varepsilon) r^*$.
\end{proof}

The above lemma reduces the approximate bichromatic 2-center problem for $S$ to the (exact) bichromatic 2-center problem for $S'$ with the solution space $\mathcal{D}$.
To solve the latter problem, we exploit its following special properties.
\begin{itemize}
    \item All points in $P(S')$ are grid points of $G$.
    \item All points in $P(S')$ are contained in a (orthogonal) square of side-length $(2+2\varepsilon)\tilde{r}$.
    Indeed, the diameter of $P(S)$ is at most $2\tilde{r}$ and thus the diameter of $P(S')$ is at most $(2+2\varepsilon)\tilde{r}$.
    \item The two disk-centers in a solution must be grid points of $G$.
\end{itemize}
By scaling, we may assume that the grid points of $G$ are the points in $\mathbb{R}^2$ with integral coordinates (or \textit{integral points} hereafter).
We say a disk is \textit{integral} if its center is an integral point.
We then pass to the following \textit{integral} bichromatic 2-center (IB2C) problem.
\medskip

\noindent
\textbf{The integral bichromatic 2-center problem (IB2C).}
Given a set $T$ of $m$ point-pairs each consisting of two integral points in $[U] \times [U]$ where $[U] = \{1,\dots,U\}$, find two integral disks bichromatically covering $T$ such that the radius of the larger one is minimized.
\medskip

\noindent
We have $|S'| \leq n$.
Before scaling, the points in $P(S')$ are contained in a square of side-length $(2+2\varepsilon)\tilde{r}$ and the side-length of the cells in $G$ is $\Theta(\varepsilon\tilde{r})$.
Therefore, the original problem is reduced to the IB2C problem with $m = O(n)$ and $U = O(1/\varepsilon)$.
$S'$ can be computed in $O(n)$ time from $S$ (using the floor function, as did in \cite{ref:ArkinBi15}).
Hence, if the IB2C problem can be solved in $f(m,U)$ time, then there is an $O(n+f(n,1/\varepsilon))$-time $(1+\varepsilon)$-approximate bichromatic 2-center algorithm.

\subsection{An IB2C Algorithm}
In this section, we solve the IB2C problem in $O(m+U^3 \log^2 U)$ time, and in turn establish our approximate bichromatic 2-center algorithm of $O(n+ (1/\varepsilon)^3 \log^2 (1/\varepsilon))$ time.
The IB2C problem itself is of independent interest, as it is a natural variant of the standard bichromatic 2-center problem.

Let $T$ be a set of $m$ point-pairs such that $P(T) \subseteq [U] \times [U]$, which is the input of the IB2C problem.
For a point $a \in [U] \times [U]$, we define $T_a \subseteq [U] \times [U]$ to be the set consisting of all points $b \in [U] \times [U]$ such that $(a,b) \in T$.
Note that $\sum_{a \in [U] \times [U]} |T_a| = O(m)$.
To solve the IB2C problem, we first compute a subset $T' \subseteq T$ with the property: a pair $(D_1,D_2)$ of disks bichromatically covers $T$ iff it bichromatically covers $T'$.
To this end, we observe an important fact.
\begin{lemma} \label{lem-cover}
    Let $(a,b_1),\dots,(a,b_k)$ be $k$ pairs of points in $\mathbb{R}^2$ sharing a common point $a$, and $b \in \mathbb{R}^2$ be a point in the convex hull of $b_1,\dots,b_k$.
    If $(a,b_1),\dots,(a,b_k)$ are all bichromatically covered by a pair $(D_1,D_2)$ of disks and $b \in D_1 \cup D_2$, then $(a,b)$ is also bichromatically covered by $(D_1,D_2)$.
\end{lemma}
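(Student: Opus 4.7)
The plan is to proceed by a short case analysis on where the shared point $a$ lies relative to the two disks $D_1$ and $D_2$. The statement to verify is that we can color $a$ and $b$ with distinct colors so that $a$ lies in one of $D_1,D_2$ and $b$ lies in the other. I will use two ingredients: the hypothesis that $b \in D_1 \cup D_2$, and the convexity of $D_1$ and $D_2$, which will let me transfer containment from the $b_i$'s to $b$.

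First I would dispense with the easy case $a \in D_1 \cap D_2$: since $b \in D_1 \cup D_2$, $b$ lies in at least one disk $D_j$, and we simply put $a$ in the other disk; this gives a valid bichromatic covering of $(a,b)$. Next, consider the asymmetric case $a \in D_1 \setminus D_2$. For each $i$, the pair $(a,b_i)$ is bichromatically covered, so one of $\{a,b_i\}$ lies in $D_1$ and the other in $D_2$; but $a \notin D_2$ forces $a$ to be placed in $D_1$ and hence $b_i \in D_2$ for every $i$. Since $D_2$ is convex and $b$ is a convex combination of $b_1,\dots,b_k$, it follows that $b \in D_2$, and assigning $a$ to $D_1$ and $b$ to $D_2$ bichromatically covers $(a,b)$. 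The case $a \in D_2 \setminus D_1$ is symmetric. Finally, the case $a \notin D_1 \cup D_2$ cannot occur (assuming $k \ge 1$), because then the pair $(a,b_1)$ could not possibly be bichromatically covered, contradicting the hypothesis.

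I do not expect any real obstacle; the lemma is essentially a convexity observation dressed up in the bichromatic-covering language. The only thing to be careful about is making the colorings consistent with the definition of bichromatic covering (each pair individually admits such a coloring — there is no global red/blue assignment to enforce across pairs), so the argument really is local to the pair $(a,b)$ and boils down to determining whether the forced disk for $b$ (namely $D_2$ in the second case, $D_1$ in the third) actually contains $b$. Convexity of a disk together with $b$ lying in the convex hull of the $b_i$'s makes this immediate.
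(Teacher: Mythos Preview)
Your proof is correct and follows essentially the same approach as the paper: both arguments reduce to a short case analysis and invoke the convexity of the disks to transfer membership from the $b_i$'s to $b$. The only cosmetic difference is that the paper cases on which disk contains $b$ (assuming WLOG $b\in D_1$) and then on $a$, whereas you case directly on the position of $a$; the logical content is identical.
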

\begin{proof}
Let $(D_1,D_2)$ be a pair of disks bichromatically covering $(a,b_1),\dots,(a,b_k)$ such that $b \in D_1 \cup D_2$.
We want to show that $(D_1,D_2)$ bichromatically covers $(a,b)$.
Without loss of generality, we assume $b \in D_1$.
If $a \in D_2$, we are done.
Otherwise, $a \in D_1$ and $b_1,\dots,b_k \in D_2$, because $(D_1,D_2)$ bichromatically covers $(a,b_1),\dots,(a,b_k)$.
Since $b$ lies in the convex hull of $b_1,\dots,b_k$ and $D_2$ is convex, we have $b \in D_2$.
Thus, $(D_1,D_2)$ bichromatically covers $(a,b)$.
\end{proof}

We construct $T'$ as follows.
For a set $Z \subseteq [U] \times [U]$ and a point $z \in Z$, we say $z$ is a \textit{left} (resp., \textit{right}) \textit{extreme point} in $Z$ if all points in $Z$ on the same horizontal line as $z$ are to the right (resp., left) of $z$, except $z$ itself.
Note that \textbf{(1)} $Z$ is contained in the convex hull of the left and right extreme points in $Z$ and \textbf{(2)} the number of the left/right extreme points in $Z$ is $O(U)$.
For $a \in [U] \times [U]$, let $T_a' \subseteq T_a$ be the subset consisting of all left and right extreme points in $T_a$.
Then we define $T' = \{(a,b): a \in [U] \times [U], b \in T_a'\}$.
We have $|T'| = O(U^3)$, as $T_a' = O(U)$ for all $a \in [U] \times [U]$.
Furthermore, $T'$ can be computed from $T$ in $O(m+U^3)$ time.
The desired property of $T'$ follows from the above lemma.
\begin{corollary}\label{cor:10}
    A pair $(D_1,D_2)$ of disks bichromatically covers $T$ iff it bichromatically covers $T'$.
\end{corollary}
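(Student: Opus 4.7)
The plan is to handle the two directions separately. The ``only if'' direction is immediate, since $T'\subseteq T$ by construction. For the ``if'' direction, assume $(D_1,D_2)$ bichromatically covers $T'$, fix an arbitrary pair $(a,b)\in T$, and aim to show that $(a,b)$ is bichromatically covered by $(D_1,D_2)$. If $(a,b)\in T'$ this is immediate, so assume $(a,b)\in T\setminus T'$. Since the construction of $T'$ enumerates $(a',b')$ for every $a'\in[U]\times[U]$ and every $b'\in T_{a'}'$, having $(a,b)\notin T'$ forces both $b\notin T_a'$ and $a\notin T_b'$.

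Since $b\notin T_a'$, in particular $b$ is not the unique point of $T_a$ on its horizontal row (which would make $b$ vacuously both left- and right-extreme on that row). Hence on that row $T_a$ has both a left extreme $b_L\in T_a'$ and a right extreme $b_R\in T_a'$, and $b$ lies strictly between them on the segment $[b_L,b_R]\subseteq\operatorname{conv}(T_a')$. The pairs $(a,b_L)$ and $(a,b_R)$ both belong to $T'$, and are therefore bichromatically covered by $(D_1,D_2)$; this is precisely the hypothesis on the pairs $(a,b_i)$ required by Lemma~\ref{lem-cover}. The one remaining hypothesis of that lemma is $b\in D_1\cup D_2$, and this is where the symmetric condition $a\notin T_b'$ enters: running the same argument with the roles of $a$ and $b$ swapped, we obtain $a_L,a_R\in T_b'$, so the pair $(b,a_L)$ lies in $T'$, is bichromatically covered, and in particular forces $b\in D_1\cup D_2$. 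Invoking Lemma~\ref{lem-cover} with the covered pairs $(a,b_L),(a,b_R)$ and the point $b$ then yields that $(a,b)$ is bichromatically covered, completing the proof.

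The main subtlety I anticipate is precisely the step $b\in D_1\cup D_2$: one might naively try to derive it from $b_L,b_R\in D_1\cup D_2$ together with $b\in[b_L,b_R]$, but this fails because the union of two disks is not convex, so the segment $[b_L,b_R]$ can leave $D_1\cup D_2$ strictly between its endpoints. It is the two-sided nature of the definition of $T'$---extremality enforced on both the $T_a$ side and the $T_b$ side---that sidesteps this obstacle: the failure of $(a,b)$ to be in $T'$ is symmetric, so $b$ must itself participate in a separate $T'$-pair, which gives $b\in D_1\cup D_2$ for free and lets Lemma~\ref{lem-cover} finish the argument.
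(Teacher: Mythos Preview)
Your proof is correct and follows essentially the same approach as the paper: invoke Lemma~\ref{lem-cover} on the pairs $\{(a,c):c\in T_a'\}$, and establish the needed hypothesis $b\in D_1\cup D_2$ by exploiting the symmetry of pairs to find a $T'$-pair containing $b$. The paper's version is marginally more streamlined---it skips the case split on $(a,b)\in T'$ and argues directly that $T_b\neq\emptyset\Rightarrow T_b'\neq\emptyset\Rightarrow b\in P(T')\subseteq D_1\cup D_2$, without needing to name specific extreme points $a_L,a_R$---but your reasoning, including the nice explanation of why the non-convexity of $D_1\cup D_2$ makes the $b\in D_1\cup D_2$ step nontrivial, is sound.
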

\begin{proof}
The ``only if'' part follows directly from the fact that $T' \subseteq T$.
To see ``if'', let $(D_1,D_2)$ be a pair of disks bichromatically covering $T'$.
We want to show that $(D_1,D_2)$ bichromatically covers every element in $T$.
Consider a point-pair $(a,b) \in T$.
Since $(b,a) = (a,b) \in T$, we have $T_b \neq \emptyset$ and thus $T_b' \neq \emptyset$.
It follows that $b \in P(T')$ and $b \in D_1 \cup D_2$.
In addition, we have $b \in T_a \subseteq \mathcal{CH}(T_a')$.
Because $(D_1,D_2)$ bichromatically covers $T'$, it bichromatically covers all the pairs in $\{(a,c): c \in T_a'\}$.
By Lemma~\ref{lem-cover}, $(a,b)$ is also bichromatically covered by $(D_1,D_2)$, as $b \in \mathcal{CH}(T_a')$ and $b \in D_1 \cup D_2$.
\end{proof}

Now it suffices to solve the IB2C problem for $T'$, whose size is $O(U^3)$.
To this end, we consider the configuration of an optimal solution.
Let $r^*$ be the radius of the larger disk in an optimal solution.
\begin{lemma}\label{lem:100}
    For all $r \geq r^*$, there exists a pair $(D_1,D_2)$ of congruent disks of radius $r$ bichromatically covering $T'$ such that the centers of $D_1$ and $D_2$ are both in $[U] \times [U]$.
    In particular, $r^* \in \{\sqrt{0},\sqrt{1},\dots,\sqrt{2(U-1)^2}\}$.
\end{lemma}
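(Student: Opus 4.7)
My plan is to reduce to a canonical optimal solution, clamp its centers into $[U]\times[U]$ using a contraction argument, and then exploit the discrete set of distances between integral points of $[U]\times[U]$ to pin down $r^*$. By the definition of $r^*$, there exists a pair of integral disks bichromatically covering $T$, hence by Corollary~\ref{cor:10} bichromatically covering $T'$, whose larger radius equals $r^*$; enlarging the smaller disk to radius $r^*$ preserves coverage, so I may assume the pair is congruent with common radius $r^*$. For any $r \geq r^*$, enlarging both to radius $r$ still yields a congruent integral bichromatic cover of $T'$, so it remains only to adjust the centers into $[U]\times[U]$.

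To move both centers into $[U]\times[U]$, I would clamp each integral center $(x,y) \in \mathbb{Z}^2$ coordinatewise by $x' = \max\{1,\min\{U,x\}\}$ and $y' = \max\{1,\min\{U,y\}\}$; the point $(x',y')$ is again integral and lies in $[U]\times[U]$. Coordinatewise clamping is exactly the orthogonal projection onto the convex box $[1,U]^2$, and orthogonal projection onto a convex set is $1$-Lipschitz and fixes points of the set, so $\lVert p-(x',y')\rVert_2 \leq \lVert p-(x,y)\rVert_2$ for every $p \in [1,U]^2$. Since $P(T') \subseteq [U]\times[U] \subseteq [1,U]^2$, every point of $P(T')$ covered by the original disk of radius $r$ remains covered by the clamped disk of the same radius $r$. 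Applying this to both disks preserves bichromatic covering of $T'$ and yields the first assertion.

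For the ``in particular'' statement, I would apply the above with $r = r^*$ and inspect the resulting solution, whose congruent integral disks of radius $r^*$ are centered inside $[U]\times[U]$. The step I expect to be the main subtlety is to argue that at least one of the two disks has a point of $P(T')$ on its boundary: otherwise both induced effective radii (the maximum distances from each center to the points it is supposed to cover) would be strictly less than $r^*$, and shrinking both radii by a common small $\varepsilon > 0$ while keeping the current bichromatic assignment of $P(T')$ would produce a valid integral-disk bichromatic cover of $T'$ with maximum radius $r^*-\varepsilon < r^*$, contradicting the optimality of $r^*$. Once such a boundary point is fixed, both it and the corresponding center lie in $[U]\times[U] \cap \mathbb{Z}^2$, so $r^* = \sqrt{(x_1-x_2)^2 + (y_1-y_2)^2}$ with $|x_1-x_2|, |y_1-y_2| \in \{0,1,\dots,U-1\}$; hence $r^{*2}$ is a nonnegative integer in $[0, 2(U-1)^2]$, giving $r^* \in \{\sqrt{0}, \sqrt{1}, \dots, \sqrt{2(U-1)^2}\}$.
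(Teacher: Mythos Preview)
Your proof is correct and follows essentially the same approach as the paper. The paper likewise takes the nearest point in $[U]\times[U]$ to each integral center (which, for an integral center, is exactly your coordinatewise clamping) and observes that this can only enlarge the set of grid points covered; it then argues, just as you do, that in an optimal solution one of the two disks must be ``unshrinkable'' with a point of $P(T')$ on its boundary, forcing $r^*$ to be a distance between two points of $[U]\times[U]$. Your explicit invocation of the $1$-Lipschitz property of projection onto a convex set is a clean way to justify the containment step that the paper leaves as ``one can easily verify.''
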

\begin{proof}
Let $r \geq r^*$ be a real number.
Clearly, there exists a pair $(D_1,D_2)$ of congruent integral disks of radius $r$ bichromatically covering $T'$, since $r \geq r^*$.
Let $c_i \in [U] \times [U]$ be the point closest to the center of $D_i$, for $i \in \{1,2\}$.
One can easily verify that $([U] \times [U]) \cap D_i \subseteq ([U] \times [U]) \cap D_r(c_i)$ and hence $P(T') \cap D_i \subseteq P(T') \cap D_r(c_i)$, for $i \in \{1,2\}$.
As such, $(D_r(c_1),D_r(c_2))$ is a pair of congruent disks of radius $r$ bichromatically covering $T'$ whose centers are in $[U] \times [U]$.
This proves the first statement of the lemma.
To show the second statement, let $(D_1,D_2)$ be a pair of congruent disks of radius $r^*$ bichromatically covering $T'$ such that the centers of $D_1$ and $D_2$ are both in $[U] \times [U]$.
Since $(D_1,D_2)$ is an optimal solution, either $D_1$ or $D_2$ is unshrinkable in the sense that there is a point $a \in P(T')$ lying on its boundary.
Therefore, $r^*$ is equal to the distance between two points in $[U] \times [U]$, which implies $r^* \in \{\sqrt{0},\sqrt{1},\dots,\sqrt{2(U-1)^2}\}$.
\end{proof}

By the above lemma, we can do binary search for $r^*$ among the $O(U^2)$ values $\sqrt{0},\sqrt{1},\dots,\sqrt{2(U-1)^2}$, and pass to the decision problem, namely, deciding whether there is a feasible solution of radius $r$ for a given number $r$.
In addition, according to the above lemma, when solving the decision problem, we may require the centers of the two disks to be in $[U] \times [U]$.
Therefore, it suffices to solve the following decision problem.
\medskip

\noindent
\textbf{The Decision Problem.}
Given a set $T'$ of $O(U^3)$ pairs of points in $[U] \times [U]$ and a value $r$, decide whether there exist two points $c_1,c_2 \in [U] \times [U]$ such that $(D_r(c_1),D_r(c_2))$ bichromatically covers $T'$.
\medskip

To solve this problem, we first establish a sufficient and necessary condition for $(D_r(c_1),D_r(c_2))$ to bichromatically cover $T'$.
\begin{lemma}\label{lem:110}
    For $c_1,c_2 \in [U] \times [U]$, $(D_r(c_1),D_r(c_2))$ bichromatically covers $T'$ iff $c_1,c_2 \in \mathcal{U}_r(T')$ and $P(T') \subseteq D_r(c_1) \cup D_r(c_2)$.
\end{lemma}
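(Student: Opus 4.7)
The claim is a two-direction equivalence, so the plan is to prove each direction separately by unwinding the definition of $\calU_r(T')$ and then performing a short case analysis.

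For the forward (``only if'') direction, I will start from the assumption that $(D_r(c_1),D_r(c_2))$ bichromatically covers $T'$ and verify each of the two required conditions. The containment $P(T')\subseteq D_r(c_1)\cup D_r(c_2)$ is immediate, since every point of $T'$ is assigned to one of the two disks. For $c_1,c_2\in\calU_r(T')$, I will fix an arbitrary pair $(p,p')\in T'$ and observe that, since one of $p,p'$ lies in $D_r(c_1)$, by symmetry of the Euclidean distance we have $c_1\in D_r(p)\cup D_r(p')=U_r(p,p')$; the same argument applied to $c_2$ (using the other point of the pair) shows $c_2\in U_r(p,p')$. Taking the intersection over all pairs gives $c_1,c_2\in\calU_r(T')$.

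For the backward (``if'') direction, the heart of the proof is showing that the joint membership $c_1,c_2\in\calU_r(T')$ together with $P(T')\subseteq D_r(c_1)\cup D_r(c_2)$ is enough to produce a valid red/blue assignment. Fix an arbitrary pair $(p,p')\in T'$. From $c_1\in U_r(p,p')$ I get that at least one of $p,p'$ lies in $D_r(c_1)$, and similarly from $c_2\in U_r(p,p')$ at least one of them lies in $D_r(c_2)$. I will then split into cases: if some endpoint, say $p$, lies in $D_r(c_1)$ while $p'\in D_r(c_2)$, the assignment is immediate; the only potentially problematic case is when both witnesses coincide, e.g. $p\in D_r(c_1)$ and $p\in D_r(c_2)$ but $p'\notin D_r(c_1)$ and $p'\notin D_r(c_2)$. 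Here the cover condition $P(T')\subseteq D_r(c_1)\cup D_r(c_2)$ is used directly to derive a contradiction, since $p'\in P(T')$. All remaining sub-cases are symmetric, so a consistent bichromatic assignment exists for every pair.

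I do not expect any genuine obstacle: the lemma is essentially a reformulation of the definition of $\calU_r(T')$ plus the observation that ``at least one endpoint of every pair lies in each disk, and both endpoints are collectively covered'' forces a valid red/blue split. The only subtlety to be careful about in the write-up is that $c_i\in\calU_r(T')$ is phrased about the points $c_i$, not about the input pairs, so I must explicitly use the equivalence $c_i\in D_r(p)\iff p\in D_r(c_i)$ to move between the two viewpoints before running the case analysis.
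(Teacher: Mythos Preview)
Your proposal is correct and essentially matches the paper's own proof: both directions are handled by unwinding the definition of $\calU_r(T')$ via the symmetry $c\in D_r(p)\iff p\in D_r(c)$ and then doing a short case analysis on a fixed pair. The only cosmetic difference is that in the backward direction the paper starts from the cover condition (picking $a\in D_r(c_1)$ first) and then invokes $c_2\in\calU_r(T')$ to place the remaining point, whereas you start from $c_1,c_2\in\calU_r(T')$ and then invoke the cover condition; the two orderings are interchangeable and yield the same argument.
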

\begin{proof}
Assume that $(D_r(c_1),D_r(c_2))$ bichromatically covers $T'$.
Then $c_1,c_2 \in D_r(a) \cup D_r(b)$ for every $(a,b) \in T'$, which implies $c_1,c_2 \in \mathcal{U}_r(T')$.
Also, $a,b \in D_r(c_1) \cup D_r(c_2)$ for every $(a,b) \in T'$, which implies $P(T') \subseteq D_r(c_1) \cup D_r(c_2)$.
Next, assume $c_1,c_2 \in \mathcal{U}_r(T')$ and $P(T') \subseteq D_r(c_1) \cup D_r(c_2)$.
We want to show that $(D_r(c_1),D_r(c_2))$ bichromatically covers $T'$.
It suffices to show that every pair in $T'$ is bichromatically covered by $(D_r(c_1),D_r(c_2))$.
Let $(a,b) \in T'$ be a pair.
We have $a,b \in D_r(c_1) \cup D_r(c_2)$, since $P(T') \subseteq D_r(c_1) \cup D_r(c_2)$.
Without loss of generality, assume $a \in D_r(c_1)$.
If $b \in D_r(c_2)$, we are done.
Otherwise, we must have $b \in D_r(c_1)$ and $a \in D_r(c_2)$, where the former follows from the fact $b \in D_r(c_1) \cup D_r(c_2)$ and the latter follows from the fact $c_2 \in D_r(a) \cup D_r(b)$.
Therefore, $(a,b)$ is bichromatically covered by $(D_r(c_1),D_r(c_2))$.
\end{proof}

Using Lemma~\ref{lem:110}, we solve the decision problem in two steps.
In the first step, we compute the set $C$ of all points in $[U] \times [U]$ that lie in $\mathcal{U}_r(T')$.
We call the points in $C$ \textit{candidate centers}.
In the second step, we check if there exist two candidate centers $c_1,c_2 \in C$ such that $P(T') \subseteq D_r(c_1) \cup D_r(c_2)$.
By Lemma~\ref{lem:110}, the answer of the decision problem is ``yes'' iff such two points exist.

The difficulty of the first step is that we are not able to compute $\mathcal{U}_r(T')$ efficiently, unless the points in $P(T')$ lie in a disk of radius $r$.
To resolve this issue, we recall the definition of $T_a'$ for a point $a \in [U] \times [U]$.
We observe that
\begin{equation*}
    \mathcal{U}_r(T') = \bigcap_{(a,b) \in T'} D_r(a) \cup D_r(b) = \bigcap_{a \in P(T')} D_r(a) \cup \mathcal{I}_r(T_a').
\end{equation*}
Therefore, a point is in $\mathcal{U}_r(T')$ iff it is in $D_r(a) \cup \mathcal{I}_r(T_a')$ for all $a \in P(T')$.
Note that we can compute $\mathcal{I}_r(T_a')$ for all $a \in P(T')$ in $O(U^3 \log U)$ time because $\sum_{a \in P(T')} |T_a'| = O(U^3)$.
With $\mathcal{I}_r(T_a')$, a direct way to compute the candidate centers is to check for every $c \in [U] \times [U]$ whether $c \in D_r(a) \cup \mathcal{I}_r(T_a')$ for all $a \in P(T')$.
However, this requires $\Omega(U^4)$ time since $|P(T')| = \Omega(U^2)$ in the worst case.
In order to do it more efficiently, our idea is to compute the candidate centers in a row simultaneously.
Formally, let $R_j = [U] \times \{j\}$ be the set of the points in the $j$-th row of $[U] \times [U]$, for $j \in [U]$.
We want to find the candidate centers in $R_j$.
For $a \in P(T')$, let $I_a$ be the intersection of $D_r(a) \cup \mathcal{I}_r(T_a')$ and the horizontal line $y = j$.
We define the \textit{depth} of a point in $R_j$ as the number of $I_a$'s containing it.
A point in $R_j$ is a candidate center iff it is contained in $I_a$ for all $a \in P(T')$, or equivalently, its depth is $|P(T')|$.
We find the candidate centers in $R_j$ by computing the depths of these points as follows.
Since $D_r(a)$ and $\mathcal{I}_r(T_a')$ are both convex, each $I_a$ is either an interval or a double-interval (i.e., the union of two disjoint intervals).
Let $E$ be the set of the endpoints of these intervals and double-intervals.
Note that $|E| \leq 4|P(T')| = O(U^2)$, as an interval has two endpoints and a double-interval has four.
We sort the points in $E \cup R_j$ from left to right in $O(U^2 \log U)$ time, and scan these points in this order.
In this procedure, we maintain a number $\mathit{dep}$ which is the depth of the current point.
Initially, we set $\mathit{dep} = 0$.
At every time we hit a left (resp., right) endpoint in $E$, we increase (resp., decrease) $\mathit{dep}$ by 1.
When we hit a point in $R_j$, its depth is just the current value of $\mathit{dep}$.
In this way, we compute the depths of the points in $R_j$ in $O(U^2 \log U)$, and find the candidate centers in $R_j$.
After doing this for all $j \in [U]$, we obtain the set $C$ of all candidate centers, which takes $O(U^3 \log U)$ time in total.

With $C$, we proceed to the second step, namely, finding two candidate centers $c_1,c_2 \in C$ such that $P(T') \subseteq D_r(c_1) \cup D_r(c_2)$, or deciding the nonexistence of them.
To this end, we first build a set of data structures $\mathcal{E}_1,\dots,\mathcal{E}_U$ as follows.
For each row $R_j$, consider the points in $C_j = C \cap R_j$.
These points lie on the horizontal line $\ell_j: y = j$.
The data structure $\mathcal{E}_j$ can answer 1-dimensional range-emptiness queries on $C_j$: given an interval $I$ on the line $\ell_j$, $\mathcal{E}_j$ can decide whether $C_j \cap I = \emptyset$ and return a point in $C_j \cap I$ if $C_j \cap I \neq \emptyset$.
Such a data structure is well-known, and can be built in $O(U \log U)$ time with $O(\log U)$ query time, as $|C_j| = O(U)$.
Building all $\mathcal{E}_1,\dots,\mathcal{E}_U$ takes $O(U^2 \log U)$ time.
With the data structures in hand, we solve the problem by considering the rows $R_1,\dots,R_U$ separately.
For each row $R_j$, we want to check whether there exist $c_1 \in C \cap R_j$ and $c_2 \in C$ such that $P(T') \subseteq D_r(c_1) \cup D_r(c_2)$.
Fix $j \in [U]$.
For $i \in [U]$, define $p_i \in R_j$ as the point whose coordinate is $(i,j)$.
Assume we set $c_1 = p_i$ for some $p_i \in C \cap R_j$.
Then there exists $c_2 \in C$ satisfying the desired property iff $C \cap \mathcal{I}_r(P(p_i)) \neq \emptyset$ where $P(p_i) = P(T') \backslash D_r(p_i)$.
Note that $C \cap \mathcal{I}_r(P(p_i)) = \bigcup_{j' \in [U]} (C_{j'} \cap I_{i,j'})$ where $I_{i,j'} = \mathcal{I}_r(P(p_i)) \cap \ell_{j'}$ is an interval on the line $\ell_{j'}$.
Therefore, if we know $I_{i,j'}$ for all $j' \in [U]$, then we can use the data structures $\mathcal{E}_1,\dots,\mathcal{E}_U$ to determine in $O(U \log U)$ time the emptiness of $C_{j'} \cap I_{i,j'}$ for all $j' \in [U]$ and hence the emptiness of $C \cap \mathcal{I}_r(P(p_i))$; furthermore, if $C \cap \mathcal{I}_r(P(p_i)) \neq \emptyset$, a point $c_2 \in C \cap \mathcal{I}_r(P(p_i))$ can be found by one of $\mathcal{E}_1,\dots,\mathcal{E}_U$.
It follows that as long as we know $I_{i,j'}$ for all $i,j' \in [U]$, we can determine in $O(U^2 \log U)$ time if there exist $c_1 \in C \cap R_j$ and $c_2 \in C$ such that $P(T') \subseteq D_r(c_1) \cup D_r(c_2)$, by enumerating all $p_i \in C \cap R_j$.

It now suffices to show how to compute the intervals $I_{i,j'}$ in $O(U^2\log U)$ time.
To this end, we first introduce some notations.
For each $p_i \in R_j$, define $P'(p_i) = \{(x,y) \in P(p_i): x \leq i\}$ and $P''(p_i) = \{(x,y) \in P(p_i): x \geq i\}$.
Then we have $P(p_i) = P'(p_i) \cup P''(p_i)$.
\begin{lemma} \label{lem-3properties}
    The sets $P'(p_i)$ and $P''(p_i)$ have the following properties. \\
    \textnormal{\bf (1)} $P(p_i) = P'(p_i) \cup P''(p_i)$ for all $i \in [U]$. \\
    \textnormal{\bf (2)} $P'(p_1) \subseteq \cdots \subseteq P'(p_U)$. \\
    \textnormal{\bf (3)} $P''(p_1) \supseteq \cdots \supseteq P''(p_U)$.
\end{lemma}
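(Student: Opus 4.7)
The plan is to derive all three properties directly from the coordinate definitions $P'(p_i) = \{(x,y) \in P(p_i) : x \leq i\}$ and $P''(p_i) = \{(x,y) \in P(p_i) : x \geq i\}$, combined with a simple monotonicity observation about how the distance to $p_i = (i,j)$ changes as $i$ varies. Property (1) is essentially immediate: every real number $x$ satisfies either $x \leq i$ or $x \geq i$, so every element of $P(p_i)$ falls in $P'(p_i)$ or $P''(p_i)$, and the reverse inclusion holds because both $P'(p_i)$ and $P''(p_i)$ are subsets of $P(p_i)$ by definition.

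For property (2), I would fix an index $i \in [U-1]$ and show $P'(p_i) \subseteq P'(p_{i+1})$. Take any $(x,y) \in P'(p_i)$. Then $x \leq i$, hence certainly $x \leq i+1$, so the first defining condition for membership in $P'(p_{i+1})$ holds. It remains to show $(x,y) \notin D_r(p_{i+1})$, i.e.\ that the distance from $(x,y)$ to $p_{i+1}$ exceeds $r$. Because $x \leq i$, we have $|x-(i+1)| = (i+1)-x > i-x = |x-i|$, so
\[
\bigl((x-(i+1))^2 + (y-j)^2\bigr)^{1/2} > \bigl((x-i)^2 + (y-j)^2\bigr)^{1/2} > r,
\]
where the last inequality uses $(x,y) \in P(p_i)$, i.e.\ $(x,y) \notin D_r(p_i)$. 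Hence $(x,y) \in P(p_{i+1})$, and combined with $x \leq i+1$ we conclude $(x,y) \in P'(p_{i+1})$. Iterating gives the full chain.

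Property (3) is symmetric: given $(x,y) \in P''(p_{i+1})$, we have $x \geq i+1 \geq i$, and since $x-i \geq 1 > 0$ we get $|x-i| = x-i > x-(i+1) = |x-(i+1)|$, so the distance from $(x,y)$ to $p_i$ strictly exceeds the distance from $(x,y)$ to $p_{i+1}$, which is already greater than $r$. Thus $(x,y) \notin D_r(p_i)$ and $(x,y) \in P''(p_i)$. There is no real obstacle here beyond carefully unpacking the definitions; the only subtlety is recognizing that among the two defining conditions of $P'$ (respectively $P''$), one gets easier and the other stays true precisely because $p_i$ moves along the horizontal line $\ell_j$ away from any point with $x \leq i$ (respectively $x \geq i+1$), so the disk $D_r(p_i)$ can only recede from such a point as $i$ increases in the appropriate direction.
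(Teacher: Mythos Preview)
Your proof is correct and follows essentially the same approach as the paper: both argue property~(1) is immediate from the definitions, and for property~(2) both fix $a=(x,y)\in P'(p_i)$, observe that $x\le i$ forces $x\le i+1$, and use the fact that moving the center rightward from $p_i$ to $p_{i+1}$ can only increase the distance to a point with $x\le i$, so $a\notin D_r(p_{i+1})$. The paper phrases this last step verbally (``$p_{i+1}$ is farther away from $a$ than $p_i$''), while you compute it explicitly with coordinates, but the argument is identical; the one small point left implicit in your write-up is that membership in $P(T')$ is automatically inherited from $P'(p_i)\subseteq P(p_i)\subseteq P(T')$.
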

\begin{proof}
The property \textbf{(1)} follows directly from definition.
The properties \textbf{(2)} and \textbf{(3)} are symmetric, so it suffices to show \textbf{(2)}.
Let $a \in P'(p_i)$ be a point.
We want to show $a \in P'(p_{i+1})$.
Since $a \in P'(p_i)$, we know that \textbf{(i)} $a \in P(T')$, \textbf{(ii)} $a \notin D_r(p_i)$, and \textbf{(iii)} $a$ is to the left of $p_i$ or has the same $x$-coordinate as $p_i$.
The fact \textbf{(iii)} implies that $p_{i+1}$ is farther away from $a$ than $p_i$.
Combining this fact with the fact \textbf{(ii)}, we have $a \notin D_r(p_{i+1})$.
Also the fact \textbf{(iii)} implies that $a$ is to the left of $p_{i+1}$.
Therefore, we have $a \in P'(p_{i+1})$, completing the proof.
\end{proof}

\noindent
By the property \textbf{(1)} in the above lemma, we have for all $i,j' \in [U]$,
\begin{equation*}
    I_{i,j'} = (\mathcal{I}_r(P'(p_i)) \cap \mathcal{I}_r(P''(p_i))) \cap \ell_{j'} = I_{i,j'}' \cap I_{i,j'}'',
\end{equation*}
where $I_{i,j'}' = \mathcal{I}_r(P'(p_i)) \cap \ell_{j'}$ and $I_{i,j'}'' = \mathcal{I}_r(P''(p_i)) \cap \ell_{j'}$.
Therefore, it suffices to compute $I_{i,j'}'$ and $I_{i,j'}''$ for all $i,j' \in [U]$.
In fact, we only need to show how to compute $I_{i,j'}'$, since $I_{i,j'}''$ can be computed in the same way.
We shall use the property \textbf{(2)} in Lemma~\ref{lem-3properties}.
Define $Q_1 = P'(p_1)$ and $Q_i = P'(p_i) \backslash P'(p_{i-1})$ for $i \in \{2,\dots,U\}$.
By the property \textbf{(2)}, we have $P'(p_i) = P'(p_{i-1}) \cup Q_i$ and hence $\mathcal{I}_r(P'(p_i)) = \mathcal{I}_r(P'(p_{i-1})) \cap \mathcal{I}_r(Q_i)$ for $i \in \{2,\dots,U\}$.
It follows that $I_{i,j'}' = I_{i-1,j'}' \cap \mathcal{I}_r(Q_i)$ for all $i \in \{2,\dots,U\}$ and $j' \in [U]$.

Based on this fact, we can compute the intervals $I_{i,j'}'$ as follows.
First, we compute $Q_1,\dots,Q_U$.
To this end, we determine for each point $a \in P(T')$ the leftmost point $\xi(a) \in R_j = \{p_1,\dots,p_U\}$ such that $a \notin D_r(\xi(a))$ and $\xi(a)$ is not to the left of $a$.
Note that $\xi(a)$ can be computed in $O(\log U)$ time by applying binary search among the points in $R_j$.
Also note that if $\xi(a) = p_i$, then $a \in Q_i$.
Therefore, once we know $\xi(a)$ for all $a \in P(T')$, we obtain $Q_1,\dots,Q_U$.
This step takes $O(U^2 \log U)$ times since $|P(T')| = O(U^2)$.
Next, we compute $\mathcal{I}_r(Q_1),\dots,\mathcal{I}_r(Q_U)$.
This can be done in $O(U^2 \log U)$ time, since $\sum_{i=1}^U |Q_i| \leq U^2$ (note that $Q_1,\dots,Q_U$ are disjoint).
Since each $\mathcal{I}_r(Q_i)$ is convex and has complexity $O(|Q_i|)$, we can do horizontal line intersection (HLI) on $\mathcal{I}_r(Q_i)$ in $O(\log |Q_i|)$ time (which is $O(\log U)$ time since $|Q_i|=O(U^2)$), that is, given a horiozontal line $\ell$, we can compute the intersection $\mathcal{I}_r(Q_i) \cap \ell$ in $O(\log |Q_i|)$ time~\cite{ref:HershbergerFi91}, by using binary search to find the two intersection points of the boundary of $\mathcal{I}_r(Q_i)$ and $\ell$.
With $\mathcal{I}_r(Q_1),\dots,\mathcal{I}_r(Q_U)$ in hand, we are ready to compute the intervals $I_{i,j'}'$.
For each $j' \in [U]$, we compute $I_{1,j'}',\dots,I_{U,j'}'$ in order.
We have $I_{1,j'}' = \mathcal{I}_r(Q_1) \cap \ell_j$ and $I_{i,j'}' = I_{i-1,j'}' \cap \mathcal{I}_r(Q_i)$ for all $i \in \{2,\dots,U\}$.
It turns out that each $I_{i,j'}'$ can be computed in $O(\log U)$ time.
Indeed, $I_{1,j'}'$ is computed via an HLI mentioned above, and each $I_{i,j'}'$ for $i \in \{2,\dots,U\}$ is computed by first computing $\mathcal{I}_r(Q_i) \cap \ell_{j'}$ (which is a segment) and then intersecting it with $I_{i-1,j'}'$.
Therefore, for a fixed $j' \in [U]$, we can computing $I_{1,j'}',\dots,I_{U,j'}'$ in $O(U\log U)$ time, which implies that $I_{i,j'}'$ for all $i,j' \in [U]$ can be computed in $O(U^2\log U)$ time as long as we know $\mathcal{I}_r(Q_1),\dots,\mathcal{I}_r(Q_U)$.
Including the time for computing $\mathcal{I}_r(Q_1),\dots,\mathcal{I}_r(Q_U)$, we see that computing $I_{i,j'}'$ for all $i,j' \in [U]$ takes $O(U^2 \log U)$ time, so does computing $I_{i,j'}$ for all $i,j' \in [U]$.

By our previous argument, it follows that one can decide whether there exists $c_1 \in C \cap R_j$ and $c_2 \in C$ satisfying $P(T') \subseteq D_r(c_1) \cup D_r(c_2)$ in $O(U^2 \log U)$ time.
By considering all $j \in [U]$, we can complete the second step in $O(U^3 \log U)$ time.

Now we see that both steps can be done in $O(U^3 \log U)$ time, which is also the time for solving the decision version of the IB2C problem.
Using the decision algorithm as a sub-routine to do binary search, we can solve the IB2C problem on $T'$ in $O(U^3 \log^2 U)$ time.
Including the time for constructing $T'$ from $T$, we finally obtain an IB2C algorithm with $O(m+U^3 \log^2 U)$ running time.
\begin{theorem}
    There exists an $O(m+U^3 \log^2 U)$-time IB2C algorithm.
\end{theorem}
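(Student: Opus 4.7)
The plan is to combine the reduction in Corollary~\ref{cor:10}, the characterization in Lemma~\ref{lem:110}, and binary search over the $O(U^2)$ candidate radii identified in Lemma~\ref{lem:100}. First, in $O(m + U^3)$ time I replace $T$ by the canonical subset $T'$ of size $O(U^3)$ obtained by keeping, for each $a \in P(T)$, only the left and right extreme points of $T_a$; bucket-indexing the pairs by their first coordinate makes this step linear in $m$ plus the output size. By Corollary~\ref{cor:10}, solving IB2C on $T$ is equivalent to solving it on $T'$, and by Lemma~\ref{lem:100}, binary search on the $O(U^2)$ candidate squared-distance values reduces the whole problem to $O(\log U)$ calls of a decision procedure. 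So the remaining task is to build a decision procedure that runs in $O(U^3 \log U)$ time.

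For the decision procedure with given radius $r$, by Lemma~\ref{lem:110} it suffices to compute the set $C$ of candidate centers (integer points lying in $\calU_r(T')$), then find a pair $c_1, c_2 \in C$ with $P(T') \subseteq D_r(c_1) \cup D_r(c_2)$. To compute $C$, I exploit the factorization $\calU_r(T') = \bigcap_{a \in P(T')} (D_r(a) \cup \calI_r(T_a'))$. Computing all the $\calI_r(T_a')$ takes $O(U^3 \log U)$ time in aggregate, since $\sum_a |T_a'| = O(U^3)$. For each horizontal line $\ell_j : y = j$, I intersect each of the $O(U^2)$ convex regions $D_r(a) \cup \calI_r(T_a')$ with $\ell_j$, yielding $O(U^2)$ intervals or double-intervals, then sweep their endpoints together with the $U$ grid points on $\ell_j$ to compute the depth of each grid point; grid points of full depth are the candidate centers on $\ell_j$. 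Per row this costs $O(U^2 \log U)$, so $C$ is assembled in $O(U^3 \log U)$ overall.

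The main obstacle is the second step: naively testing every candidate $c_1$ would require recomputing $\calI_r(P(T') \setminus D_r(c_1))$, which is too slow. I will overcome this by exploiting the monotone structure of the uncovered set as $c_1$ sweeps along a row. Fix a row $R_j$ and write $c_1 = p_i$. Split the set of still-uncovered points as $P'(p_i) \cup P''(p_i)$, the points to the left/right of $p_i$ (inclusive) that lie outside $D_r(p_i)$; Lemma~\ref{lem-3properties} ensures $P'(p_i)$ grows monotonically and $P''(p_i)$ shrinks monotonically in $i$. Hence the disjoint increments $Q_i = P'(p_i) \setminus P'(p_{i-1})$ have total size $O(U^2)$, their $\calI_r(Q_i)$ can be built in $O(U^2 \log U)$ time, and for each query row $\ell_{j'}$ the intervals $I'_{i,j'} = \calI_r(P'(p_i)) \cap \ell_{j'}$ can be maintained incrementally in $O(\log U)$ time per step by a horizontal line intersection against $\calI_r(Q_i)$ followed by an interval intersection; the symmetric treatment handles $P''$. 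Finally, precomputing $O(U)$ one-dimensional range-emptiness structures on the rows $C \cap R_{j'}$ allows each candidate $p_i$ to be tested in $O(U \log U)$ time by querying emptiness of $I_{i,j'} \cap C \cap R_{j'}$ across all rows. Summing over rows and candidates gives $O(U^3 \log U)$ for one decision call, and with the $O(\log U)$-factor binary search plus the initial $O(m)$ preprocessing, the total cost is $O(m + U^3 \log^2 U)$ as claimed.
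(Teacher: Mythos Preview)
Your proposal is correct and follows essentially the same approach as the paper: reduce to $T'$ via extreme points, binary-search over the $O(U^2)$ candidate radii from Lemma~\ref{lem:100}, and build an $O(U^3\log U)$ decision procedure whose first step computes $C$ by per-row depth sweeping over the factorization $\calU_r(T')=\bigcap_a(D_r(a)\cup\calI_r(T_a'))$ and whose second step uses the monotone left/right decomposition of Lemma~\ref{lem-3properties} together with the increments $Q_i$ and row-wise range-emptiness structures. One small slip: the regions $D_r(a)\cup\calI_r(T_a')$ are not convex---the correct reason their intersection with a horizontal line is at most a double-interval is that each of $D_r(a)$ and $\calI_r(T_a')$ is individually convex---but this does not affect the argument.
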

\begin{corollary}
    The $(1+\varepsilon)$-approximate bichromatic 2-center problem on a set of $n$ pairs of points in the plane is solvable in $O(n+ (1/\varepsilon)^3 \log^2 (1/\varepsilon))$ time.
\end{corollary}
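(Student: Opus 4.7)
The plan is to assemble the corollary from the machinery already developed in Section~\ref{sec-approx}: the grid-based reduction (Lemma~\ref{lem:70} and Lemma~\ref{lem:80}) and the IB2C algorithm (the preceding theorem). Since we do not know $r^*$ in advance, the proof will run both cases in parallel and return the better of the two feasible solutions.

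First, compute the radius $\tilde{r}$ of the minimum enclosing disk of $P(S)$ in $O(n)$ time using any linear-time smallest enclosing disk algorithm. Then execute two branches. In the first branch, tentatively assume $\tilde{r} \geq 10 r^*$, and apply the $O(n)$-time exact algorithm of Lemma~\ref{lem:70}; if this assumption holds, the output is the exact optimum. In the second branch, tentatively assume $\tilde{r} < 10 r^*$, and carry out the reduction of Section 4.1: build the grid $G$ of side length $\delta = \varepsilon \tilde{r}/100$, construct the point-pair set $S'$ (one pair per grid-cell pair of the input) in $O(n)$ time using the floor function, and pass to the IB2C instance with $m = |S'| = O(n)$ and $U = O(1/\varepsilon)$. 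Invoke the IB2C theorem to obtain two congruent integral disks $(D_r(c_1), D_r(c_2))$ of radius $r$ bichromatically covering $S'$, in $O(m + U^3 \log^2 U) = O(n + (1/\varepsilon)^3 \log^2 (1/\varepsilon))$ time. Set $r' = (1+\varepsilon/3) r$; by Lemma~\ref{lem:80}, $(D_{r'}(c_1), D_{r'}(c_2))$ is feasible for $S$ and satisfies $r' \leq (1+\varepsilon) r^*$ provided $\tilde{r} < 10 r^*$.

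Finally, return whichever of the two produced solutions has the smaller radius among those that are feasible for $S$ (feasibility is checked by verifying in $O(n)$ time that each pair of $S$ has at least one point in each disk, respecting the bichromatic assignment). Correctness follows by case analysis on the unknown quantity $r^*$: if $\tilde{r} \geq 10 r^*$, the first branch outputs a feasible optimal solution, which is trivially a $(1+\varepsilon)$-approximation; if $\tilde{r} < 10 r^*$, the second branch outputs a feasible $(1+\varepsilon)$-approximation by Lemma~\ref{lem:80}. In either case the returned minimum has radius at most $(1+\varepsilon) r^*$.

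The main obstacle is not in this proof itself, since it is essentially a composition of Lemmas~\ref{lem:70}, \ref{lem:80} and the IB2C theorem; the nontrivial work already lives in those results. The only subtlety to verify is that the total running time is bounded by $O(n) + O(n + (1/\varepsilon)^3 \log^2(1/\varepsilon))$, which simplifies to the claimed $O(n + (1/\varepsilon)^3 \log^2(1/\varepsilon))$ since the feasibility check and the computation of $\tilde{r}$ take only linear time. Under the standing assumption that $\varepsilon$ is sufficiently small, the approximation ratio $(1+\varepsilon/3)(1+\varepsilon/5) \leq 1+\varepsilon$ used in Lemma~\ref{lem:80} holds, completing the argument.
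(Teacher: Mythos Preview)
Your proposal is correct and takes essentially the same approach as the paper: the corollary is stated without proof there, following directly from the reduction in Section~4.1 (Lemmas~\ref{lem:70} and~\ref{lem:80}) together with the IB2C theorem, exactly as you assemble it. Your two-branch treatment of the unknown relationship between $\tilde{r}$ and $r^*$ makes explicit what the paper leaves implicit; one minor wording issue is that ``each pair has at least one point in each disk'' is not quite the bichromatic covering condition (a pair $(p,p')$ with $p$ in both disks and $p'$ in neither would pass your test but fail feasibility), though the intended $O(n)$ check is of course easy to state correctly.
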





%

\bibliographystyle{plain}
\bibliography{reference}




\end{document}